\newif\ifcameraready
\newcommand{\switch}[1]{\includegraphics{#1}}
\newtheorem{theorem}{Theorem}
\newtheorem{lemma}[theorem]{Lemma}
\newtheorem{definition}[theorem]{Definition}
\newcommand{\sqa}{s_1}
\newcommand{\sqb}{s_2}
\newcommand{\sqc}{s_3}
\newcommand{\sqd}{s_4}
\newcommand{\boxb}[1]{#1'}
\begin{document}

\makeatletter
\newcommand{\stx}[1]{\mathpalette\giusti@stx{#1}}
\newcommand{\giusti@stx}[2]{%
  \mbox{%
    \medmuskip=\thinmuskip
    \thickmuskip=\thinmuskip
    $\m@th#1#2$%
  }%
}
\makeatother
\newcommand{\mytimes}{\mathbin{\mkern-2mu\times\mkern-2mu}}

\newcommand{\boxDims}[3]{\stx{#1 {\times} #2 {\times} #3}}

\newcommand{\minPossible}{\Psi}
\newcommand{\minImpossible}{\Delta}

\title{Unfolding Boxes with Local Constraints}
%
%
\author{Long Qian \orcidlink{0000-0003-1567-3948},
Eric Wang \orcidlink{0009-0004-6636-4541}, Bernardo Subercaseaux \orcidlink{0000-0003-2295-1299}
\and Marijn J. H. Heule \orcidlink{0000-0002-5587-8801} }

\email{\{longq,ebwang,bsuberca,mheule\}@andrew.cmu.edu}
\address{Carnegie Mellon University, Pittsburgh, PA 15213, USA}
%
%
%
%
\begin{abstract}
We consider the problem of finding and enumerating polyominos that can be folded into multiple non-isomorphic boxes.
While several computational approaches have been proposed, including SAT, randomized algorithms, and decision diagrams, none has been able to perform at scale.
We argue that existing SAT encodings are hindered by the presence of global constraints (e.g., graph connectivity or acyclicity), which are generally hard to encode effectively and hard for solvers to reason about.
In this work, we propose a new SAT-based approach that replaces these global constraints with simple local constraints that have substantially better propagation properties. 
Our approach dramatically improves the scalability of both computing and enumerating common box unfoldings: (i) while previous approaches could only find common unfoldings of two boxes up to area 88, ours easily scales beyond 150, and (ii) while previous approaches were only able to enumerate common unfoldings up to area 30, ours scales up to 60. This allows us to rule out 46, 54, and 58 as the smallest areas allowing a common unfolding of three boxes, thereby refuting a conjecture of Xu et al.\ (2017).
\end{abstract}
\vspace*{-1.5cm}
\maketitle
\section{Introduction}\label{sec:intro}
Folding two-dimensional surfaces into a three-dimensional structure is a fundamental problem in computational geometry, with many applications ranging from the arts (e.g., origami) to diverse fields of engineering (e.g., packaging, protein folding)~\cite{demaine2007geometric}.
Perhaps the simplest example, usually introduced to children, is that of folding a $\boxDims{1}{1}{1}$ box (\Cref{fig:111box_intro_b}) from a \emph{net} corresponding to a polyomino of area 6 (\Cref{fig:111box_intro_a}). Interestingly, as depicted in~\Cref{fig:111box_intro_c}, multiple nets can fold into the same box, and the problem of enumerating all the nets that fold into a given $\boxDims{a}{b}{c}$ box is already non-trivial (see~\Cref{tab:11n_nets}).

\begin{figure}[h]
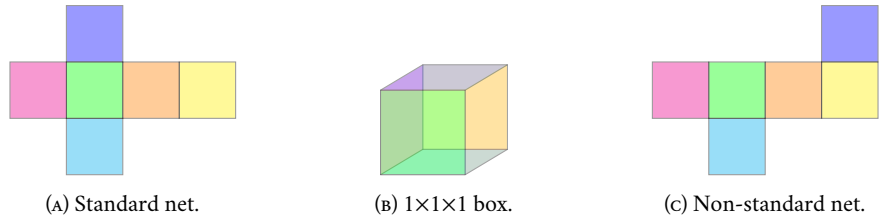

    \begin{subfigure}{0.3\textwidth}
        \centering
        \switch{figures/111box_intro_std}
        \caption{Standard net.}\label{fig:111box_intro_a}
    \end{subfigure}
    \begin{subfigure}{0.3\textwidth}
       \centering 
        \switch{figures/111box_intro_box}
        \caption{$\boxDims{1}{1}{1}$ box.}\label{fig:111box_intro_b}
    \end{subfigure}
    \begin{subfigure}{0.3\textwidth}
        \centering
        \switch{figures/111box_intro_non}
        \caption{Non-standard net.}\label{fig:111box_intro_c}
    \end{subfigure}
    \caption{Illustration of the non-uniqueness of nets that fold into a box.}
    \label{fig:111box_intro}
    \end{figure}

An even more surprising fact, depicted in~\Cref{fig:non_isomorphic_boxes}, is that sometimes multiple non-isomorphic boxes can be obtained from the same net, simply by folding along different edges.
These~\emph{common unfoldings} (also known as~\emph{common developments}~\cite{XU20171}) allow for interesting engineering applications: a single two-dimensional piece of cardboard can be used for different types of boxes, depending on the dimensions of the object to be packed.

\begin{figure}[h]
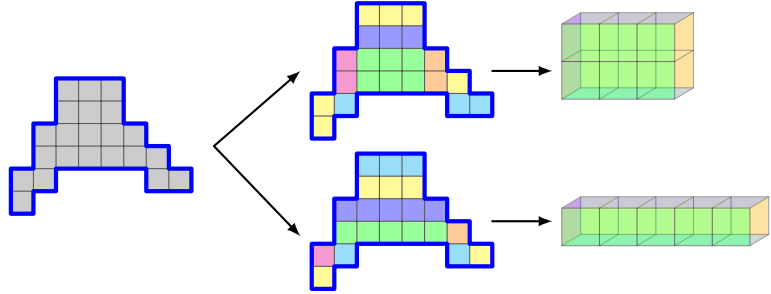

    \centering
    \switch{figures/common_standalone}
    \caption{Two non-isomorphic boxes that can be folded from the same net.}\label{fig:non_isomorphic_boxes}
\end{figure}

A body of work has focused in the particular case in which the nets are \emph{polyominos}, and one can only fold along the edges of the unit squares forming the polyomino (see~\Cref{fig:non_isomorphic_boxes})~\cite{randomAlg2008,threeBoxes2012,tadaki2020searchdevelopmentsboxhaving,ueharaSurvey2015,XU20171}. Nonetheless, several important questions remain open. To state them, let us define some notation. 

For a positive integer $s$ representing a surface area, let $P(s)$ be the set of all triples $(a, b, c)$ with $a \leq b \leq c$ such that $s = 2(ab + ac + bc)$. In other words, $P(s)$ is the set of all possible integer dimensions of a box with surface area $s$. For example, $P(22) = \{(1,1,5), (1, 2, 3)\}$, and as shown in~\Cref{fig:non_isomorphic_boxes}, it turns out that both boxes in $P(22)$ can be folded from the same net of area $22$. 
For a positive integer $k$, let $\minPossible(k)$ be the smallest integer $s$ such that there is a subset $P' \subseteq P(s)$ with $|P'| = k$ and all boxes in $P'$ have a common unfolding. 
The example in~\Cref{fig:non_isomorphic_boxes} shows that $\minPossible(2) \leq 22$, and it can be easily checked that no smaller value of $s$ works. An impressive result by Shirakawa and Uehara is that $\minPossible(3) \leq 532$, as they showed a common unfolding for boxes $\boxDims{7}{8}{14}$, $\boxDims{2}{4}{43}$, and $\boxDims{2}{13}{16}$. Xu et al.~\cite{XU20171} conjectured $\minPossible(3) = 46$, which is the smallest value for which $|P(s)| \geq 3$.
For $k \geq 4$, it is not even known whether $\minPossible(k)$ is finite.
We define as well the opposite quantity, $\minImpossible(k)$, as the smallest integer $s$ such that there is a subset $P' \subseteq P(s)$ with $|P'| = k$ where \textbf{no} common unfolding for $P'$ exists. The work of Mitani and Uehara~\cite{randomAlg2008} shows that $\minImpossible(2) > 38$, and suggests $\minImpossible(2) > 88$. On the other hand, no upper bounds for $\minImpossible(2)$ are known. 

\begin{figure}[h]
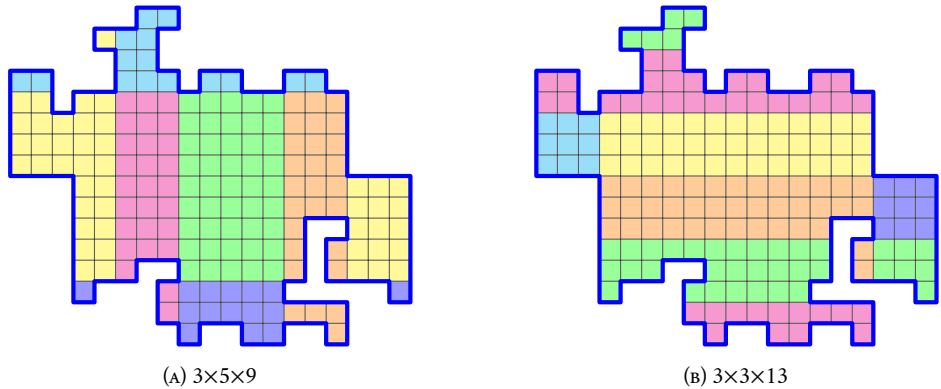

    \begin{subfigure}{0.49\textwidth}
    \centering
        \switch{figures/area_174_359}
        \caption{$\boxDims{3}{5}{9}$}
    \end{subfigure}    
    \begin{subfigure}{0.5\textwidth}
        \centering
        \switch{figures/area_174_3313}
        \caption{$\boxDims{3}{3}{13}$}
    \end{subfigure}
    \caption{A common unfolding of area 174.}\label{fig:area_174}   
\end{figure}

In this work, we show that $\minPossible(3) > 58$, that $\minImpossible(3) = 46$. We show as well that $\minImpossible(2) > 86$, and using heuristics we manage to compute solutions for certain very large areas (see~\Cref{fig:area_174}). More importantly, the SAT-based approach we developed to prove these results showcases a more general idea that might be applicable in a wide variety of contexts: global constraints, for which conflicts can be detected only after long propagation chains, can be approximated by local constraints for which conflicts are detected much earlier, leading to substantially better performance.
To illustrate the power of our approach, let us present some brief elements in comparison with previous research. 
In 2011, Abel et al.~\cite{22allsols2011} enumerated all common unfoldings for boxes $\boxDims{1}{1}{5}$ and $\boxDims{1}{2}{3}$ in about $10$ hours~\cite{XU20171}, whereas our approach allows a complete enumeration in $2$ minutes on a personal computer. Xu et al.~\cite{XU20171} remarked that using the same approach for area $30$ would have taken \emph{``too huge memory even on a supercomputer''}, and their more efficient ZDD-based approach took only $10$ days for area $30$; ours takes $10$ minutes. Moreover, Xu et al. conjectured that $\minPossible(3) = 46$, saying \emph{``However, the number of polygons of area 46 seems to be too huge to search''}. On a supercomputer~\cite{brownBridges2PlatformRapidlyEvolving2021}, our approach took $3$ hours to refute this conjecture.

\bigskip

\paragraph{\bf Code.} Our code and the instructions to reproduce our results are publicly available at\\ \url{https://github.com/LongQianQL/CADE30-BoxUnfoldings}.

\bigskip

\paragraph{\bf Organization.} In~\Cref{sec:overview}, we give a high-level overview of our approach. Then,~\Cref{sec:valid_cut_edges} introduces necessary properties that box unfoldings must satisfy, and that guide our encoding. In~\Cref{sec:local}, we detail the differences with previous SAT encodings for a spanning tree (a natural subproblem for finding unfoldings) as we use local constraints that approximate both connectivity and acyclicty. Then, in~\Cref{sec:common_unfoldings}, we show how to encode that a folding net, which our encoding keeps implicit, maps to two different boxes. In~\Cref{sec:symmetry} we show how we break rotational symmetries of the problem. \Cref{sec:related} discusses related work, and in particular, the limitations of previous SAT encodings. In~\Cref{sec:experiments} we present our experimental results. 

\section{Overview of our approach}\label{sec:overview}
\begin{figure}[t]
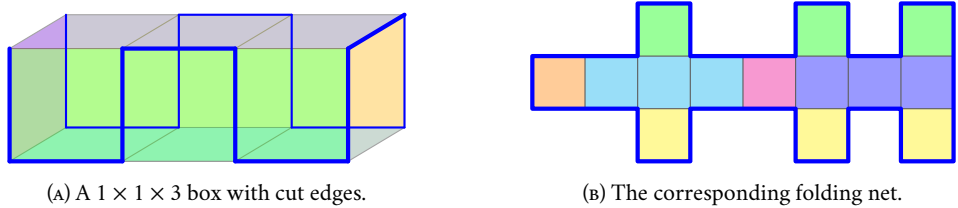

    \begin{subfigure}{0.49\textwidth}
      \centering
      \switch{figures/113box}
    \caption{A $1 \times 1 \times 3$ box with cut edges.}\label{fig:113_box}
    \end{subfigure}
    \hfill
    \begin{subfigure}{0.49\textwidth}
        \centering
      \switch{figures/113net}
      \caption{The corresponding folding net.}\label{fig:113_net}
    \end{subfigure}
    \caption{The correspondence between cuts in a box and its folding nets.}\label{fig:boxnet}
  \end{figure}

  The first step toward finding common unfoldings of multiple boxes is finding unfoldings of a single box. As in previous work (e.g.,~\cite{demaine2007geometric,tadaki2020searchdevelopmentsboxhaving,XU20171}), we frame the search for unfoldings of a given box $B$ in terms of a search for \emph{``cut edges''} in $B$ (see~\Cref{fig:boxnet}) which after being physically cut, would allow to unfold the box into a flat net without overlaps. 
  For instance, if one were to cut the blue edges of the $\boxDims{1}{1}{3}$ box depicted in~\Cref{fig:113_box}, and then proceed to unfold\footnote{A precise mathematical definition of folding/unfolding turns out to be pretty intricate, using the entirety of Chapter 11 in the book of Demaine and O'Rourke~\cite{demaine2007geometric}. We will thus mostly stick to intuition.} the box, the result would be the net depicted in~\Cref{fig:113_net}.
  More precisely, any net $N$ that folds into a box $B$ through a sequence $\gamma_1, \ldots, \gamma_k$ of folding motions in space can be obtained by cutting a subset of the edges of the unit-squares that compose the different faces of $B$, and then reversing the folding motions as $\gamma_k^{-1}, \ldots, \gamma_1^{-1}$ to obtain the net $N$.

   Note, however, that not all sets of cut edges are \emph{``valid''}, in the sense of allowing for an unfolding of the box into a flat net. For example, one can easily see that cutting a single edge of a box never allows for an unfolding, and furthermore in~\Cref{sec:valid_cut_edges} we will describe a simple argument showing one needs at least $4$ cut edges to unfold a box. Similarly, another requirement for a set of cut edges to be valid is not contain cycles; cutting along a cycle would separate the box into disconnected pieces!

Nonetheless, provided an efficient method to find valid sets of cut edges for a box $B$, we can search for common unfoldings of multiple boxes. Indeed, to find a net $N$ that folds into different boxes $B_1, \ldots, B_m$, we need to find a valid set $C_i$ of cut edges in each box $B_i$ such that the different sets $C_i$ are \emph{``compatible''}, which intuitively means that for each pair $B_i, B_j$, the unit squares of $B_i$ can be mapped to those of $B_j$ so that two adjacent squares in $B_i$ without their common edge cut map to two adjacent squares in $B_j$ without their common edge cut. The details of this mapping presented in~\Cref{sec:common_unfoldings}.

Now, to illustrate our general methodology, let us present a concrete result and a high-level sketch of how we obtain it.
\begin{theorem}
  No set of three non-isomorphic boxes of area $58$ or less has a common unfolding.  In other words, $\minPossible(3) > 58$. 
\end{theorem}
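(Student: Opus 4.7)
The plan is to reduce the statement to a finite collection of unsatisfiability checks. Since any witness to $\minPossible(3) \le s$ requires $|P(s)| \ge 3$, I would first enumerate the values $s \le 58$ with at least three integer triples $(a,b,c)$, $a \le b \le c$, satisfying $s = 2(ab+ac+bc)$. A direct check shows the only such surface areas are $s = 46, 54, 58$, with $|P(46)| = |P(54)| = |P(58)| = 3$, so there is exactly one candidate triple of boxes per area. For each of these three triples $(B_1, B_2, B_3)$, I would build a single CNF formula, using the encoding developed in \Cref{sec:valid_cut_edges,sec:local,sec:common_unfoldings}, whose satisfying assignments correspond to triples of valid cut-edge sets $C_1, C_2, C_3$ together with compatible square-to-square mappings between the three boxes; the symmetry-breaking clauses of \Cref{sec:symmetry} are added to shrink the search space.

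Having constructed these three formulas, I would hand each to a modern CDCL SAT solver and argue that all three return UNSAT. Because the conclusion is a negative existence claim, I would emit DRAT certificates during solving and check them with a formally verified proof checker; this turns the theorem into a machine-checkable fact modulo the correctness of the encoding. The encoding correctness itself is a separate (and deliberately small) lemma: any common unfolding induces valid cut-edge sets together with a face-to-face mapping, and conversely any satisfying assignment determines a concrete polyomino net folding to all three boxes, so UNSAT on all three instances is equivalent to the theorem.

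The main obstacle is scale rather than conceptual difficulty. The number of edges, face squares, and mapping variables grows quickly with $s$, and the area-$58$ instance in particular is at the boundary of what is tractable: using a global-connectivity or global-acyclicity encoding in the spirit of prior work would make the formulas too hard to refute in any reasonable time, as discussed in \Cref{sec:related}. The plan therefore depends essentially on the local-constraint reformulation of the spanning-tree condition given in \Cref{sec:local}, which gives the unit-propagation strength needed to close the search, and on the rotational symmetry breaking of \Cref{sec:symmetry}, which removes large equivalent subspaces. A secondary practical obstacle is wall-clock time for the single area-$58$ instance; however, the work across triples and across internal subcases is embarrassingly parallel, so I would dispatch it on a cluster, matching the three-hour supercomputer figure cited for area $46$.
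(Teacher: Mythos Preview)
Your overall plan is sound and would succeed, but it differs from the paper in one structural choice and carries one incorrect claim about the encoding that you should repair even though it does not actually break the argument.

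Structurally, the paper does not build a single three-box formula per area and test it for UNSAT. It works in two stages: first it exhaustively enumerates all satisfying assignments of the two-box formula $\Phi(B_1,B_2)$ (via an allsat solver together with the per-pair symmetry split of \Cref{sec:symmetry}), obtaining a finite set $S'$ of candidate pairs $(C_1,C_2)$; then, for each such pair, it fixes $C_1$ by unit clauses and issues a fresh SAT query asking whether a compatible $C_3$ on $B_3$ exists, and all of these secondary queries return UNSAT. Your monolithic three-box formula is logically equivalent---it is UNSAT precisely when every element of $S'$ fails the secondary check---so your route would also certify the theorem. What the paper's decomposition buys is reuse of the already-implemented and heavily parallelised two-box enumeration (\Cref{tab:all_sols_times}), and the secondary queries become very small once $C_1$ is pinned.

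The claim to correct is your ``conversely any satisfying assignment determines a concrete polyomino net folding to all three boxes.'' With the encoding of \Cref{sec:local} this is false. The local forbidden-pattern constraints are only \emph{necessary} conditions, a deliberate under-approximation replacing exact connectivity and acyclicity; the paper is explicit that the enumerated set $S'$ is a \emph{superset} of the true common unfoldings and that spurious models may occur. The theorem needs only the forward direction (a common unfolding induces a satisfying assignment), so UNSAT still implies nonexistence and your DRAT-checked refutations would be conclusive. But drop the biconditional: the encoding does not give you ``UNSAT is equivalent to the theorem,'' only the one implication you actually use.
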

\begin{proof}[Methodology]
Let us consider the particular case of ruling out area $46$, since the other areas are analogous. The set of non-isomorphic box-dimensions for area $46$ is 
\(
P(46) = \{ (1, 1, 11), (1, 2, 7), (1, 3, 5)\}.
\)
Any common unfolding of the corresponding boxes is, in particular, a common unfolding of boxes $B_1 := \boxDims{1}{1}{11}$ and $B_2 := \boxDims{1}{2}{7}$, so we focus first on constructing the set $S$ of all common unfoldings of these two boxes, where each common unfolding can be represented as a pair $(C_1, C_2)$, where $C_1$ (resp. $C_2$) is the set of edges to cut in $B_1$ (resp. $B_2$). However, instead of computing $S$ exactly, we compute a superset $S' \supseteq S$, that contains all common unfoldings of $B_1$ and $B_2$, but also potentially pairs $(C_1, C_2)$ where the sets of cut edges do not necessarily allow to unfold the boxes into a common net.
The set $S'$ is obtained by enumerating all satisfying assignments of a CNF formula $\Phi(B_1, B_2)$, whose constraints will be detailed throughout the paper.
Then, for each pair $(C_1, C_2) \in S'$, we do another SAT call to check whether it is possible to unfold box $B_3 := \boxDims{1}{3}{5}$ in a way that is compatible with $(C_1, C_2)$.
Since none of the $|S'|$ calls is satisfiable, we can conclude that no common unfolding of $B_1, B_2, B_3$ exists. \qed
\end{proof}

While a SAT encoding for the problem of finding common unfoldings of two (or more) boxes was already presented by Tadaki and Amano~\cite{tadaki2020searchdevelopmentsboxhaving}, our approach represents a significant improvement in allowing to search and enumerate common unfoldings for significantly larger dimensions. At a high-level, the key improvements of our encoding are:
\begin{enumerate}
  \item When encoding the unfolding of a single box, that is, whether a set of cut edges is \emph{``valid''}, we do not explicitly encode the net as~\cite{tadaki2020searchdevelopmentsboxhaving}, but rather properties that the set of cut edges must satisfy, and moreover, our encoding of these properties is a very efficient under-approximation, that replaces global constraints (i.e., the connectivity constraint of~\cite{tadaki2020searchdevelopmentsboxhaving}) with local constraints.
  \item When encoding a 2-box common unfolding, we do not use a net as intermediary, and instead encode the existence of a direct mapping between the unit-squares of box $B_1$ and those of $B_2$, that \emph{``preserves''} the cut edges. 
  \item We exploit the symmetry of the boxes to reduce the search space. For example, in a $\boxDims{1}{1}{11}$ there are several rotational symmetries that we break.
\end{enumerate}
Before we detail the improvements, let us describe what it means for a set of cut edges to be \emph{``valid''}, or more precisely, some necessary conditions for it.

\section{Valid sets of cut edges}\label{sec:valid_cut_edges}
Let us first briefly describe $4$ well-known (see~\cite[Ch. 21]{demaine2007geometric}, \cite{randomAlg2008}) necessary properties for a set of cut edges to be valid for a box $B$:
\begin{enumerate}
    \item[P1.] \textbf{Connectivity:} The graph induced by the cut edges (taking the set of their endpoints as vertices) must be connected. 
    \item[P2.] \textbf{Cut corners:} The graph induced by the cut edges must touch all the 8 corners of the box $B$.
    \item[P3.] \textbf{Acyclicity:} The graph induced by the cut edges must be acyclic.
    \item[P4.] \textbf{Necessity:} For every set of four unit-squares $\{\sqa, \sqb, \sqc, \sqd\}$ that forms a $2\!\times\!2$ square on $B$, it cannot be the case that \emph{exactly one} edge between these unit-squares is cut, as such cuts are not necessary \cite[Lemma~1]{randomAlg2008}.
\end{enumerate}

We suggest the reader inspects~\Cref{fig:113_box} to check that these constraints are satisfied, and to try to obtain some insight into their necessity. Intuitively, P1 is justified by the fact that the cut edges unfold into the boundary of the net, as exemplified in~\Cref{fig:boxnet}, and that boundary is connected. P2, on the other hand, can be justified by noting that every non-cut edge connecting adjacent unit-squares $\sqa, \sqb$ in a box $B$ will remain an edge connecting two adjacent squares in the net $N$ that folds into $B$. Thus, if none of the three edges incident to a corner are cut, then the three squares incident to that corner will remain adjacent in the net $N$, which is a contradiction since in a polyomino there cannot be three pairwise adjacent squares, as illustrated in \Cref{fig:cut_corners}. 
P3 is justified by the fact that if the graph induced by the cut edges contained a cycle $C$, then the unit squares inside $C$ would be disconnected from the rest when unfolding the box $B$. Finally, P4 is intuitively justified by noticing that if exactly one such edge is cut, then this edge can always be ``glued'' back without affecting the underlying unfolding, a rigorous proof of P4 can also be found in earlier works \cite[Lemma~1]{randomAlg2008}. 

\begin{figure}[b]
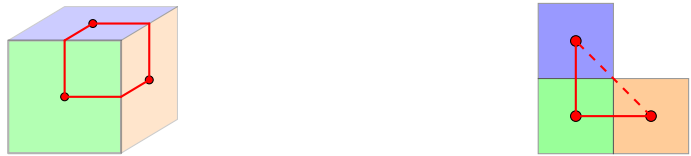

    \begin{subfigure}{0.49\linewidth}
    \centering
   \switch{figures/corner_box}
  \end{subfigure}
  \begin{subfigure}{0.49\linewidth}
    \centering
    \switch{figures/corner_net}
  \end{subfigure}
  \caption{Illustration of the necessity of the cut-corners property.}\label{fig:cut_corners}
\end{figure}

To represent whether an edge $\{\sqa, \sqb\}$ is cut or not, we can simply use a boolean variable $e_{\sqa, \sqb}$ that is true if and only if $\{\sqa, \sqb\}$ is \emph{not cut}. Then, P2 can be trivially encoded by 8 clauses of the form
\(
    (\overline e_{\sqa, \sqb} \lor \overline e_{\sqa, \sqc} \lor \overline e_{\sqb, \sqc}),
\)
where $\sqa, \sqb, \sqc$ are the three adjacent squares on a corner of the box. Similarly, for each $2\!\times\!2$ square $\{\sqa, \sqb, \sqc, \sqd\}$ on $B$ (with $\{\sqa, \sqd\}$ non-adjacent), $P_4$ can be encoded by 4 clauses of the form
\(
  (e_{\sqa, \sqb} \land e_{\sqa, \sqc} \land e_{\sqb, \sqd} \rightarrow e_{\sqc, \sqd})
\) 
The difficulty, however, arises when encoding properties P1 and P3, which are \emph{``non-local''} properties, and despite a body of research, remain challenging to encode without resulting in either a large number of clauses or poor propagation properties~\cite{gebserSATModuloGraphs2014,zhou_et_al:LIPIcs.SAT.2023.30}.

Our approach replaces these constraints by a set of local constraints that intuitively pursue a similar goal as P1 and P3: ensuring that the graph of cut edges has sufficiently many edges (P1) without having too many (P3). Concretely:
\begin{enumerate}
    \item[1.] To force cutting a significant number of edges, we leverage the work of Tadaki and Amano~\cite{tadaki2020searchdevelopmentsboxhaving}, and assign orientations to each square of the box, which then allows for enforcing consistency constraints between adjacent squares whose common edge is not cut. To satisfy those constraints, a significant number of edges must be cut.
    \item[2.] In contrast to the encoding in~\cite{tadaki2020searchdevelopmentsboxhaving}, which forbid cutting too many edges by enforcing the connectivity of the resulting 2D net, explicit in their encoding, we use a fully novel idea: assigning orientations to the edges of the box, and then forbidding a small number of local directed patterns that every valid unfolding can avoid, but most disconnected nets contain. Intuitively, since disconnected nets correspond to cycles of cut edges, our encoding attempts to prevent such cycles.
\end{enumerate}

\section{Local constraints}\label{sec:local}

  
  
  

As described in~\Cref{sec:valid_cut_edges}, our goal is to impose constraints that ensure that sufficiently many, but not too many, variables $e_{\sqa, \sqb}$ (representing that the edge $\{\sqa, \sqb\}$ is not cut) are set to true.
We achieve these goals independently: we enforce cutting edges using \emph{``square-orientation constraints''} in a similar (albeit with important differences) way to Tadaki and Amano~\cite{tadaki2020searchdevelopmentsboxhaving}, and use a fully novel approach, based on \emph{``edge directions''} to forbid too many edges from being cut.

\subsection{Square orientations}



\begin{figure}[b]
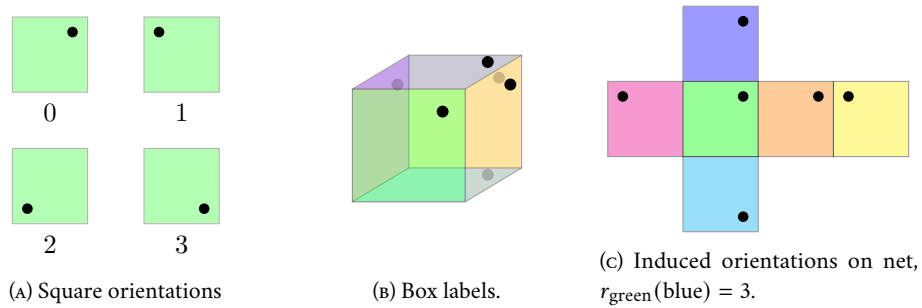

  \centering
  \begin{subfigure}{0.30\textwidth}
      \centering
      \switch{figures/dot-square}      
      \caption{Square orientations}\label{fig:square_ori}
  \end{subfigure}
  \begin{subfigure}{0.30\textwidth}
          
    \centering 
      \switch{figures/dot-box}      
    \caption{Box labels.}
    \label{fig:ori_unfold_box}
\end{subfigure}
  \begin{subfigure}{0.30\textwidth}
      \centering
      \switch{figures/dot-net}      
      \caption{Induced orientations on net, $r_{\text{green}}(\text{blue}) = 3$.}
      \label{fig:ori_unfold_net}
  \end{subfigure}
  \caption{Square orientations, labels, and relative orientations.}
  \label{fig:orientations}
\end{figure}
Inspired by~\cite{tadaki2020searchdevelopmentsboxhaving}, we think of each square in a box $B$ as having an \emph{``orientation''}, that intuitively represents whether, on an unfolding of the net, the square would be rotated by $0^\circ$, $90^\circ$, $180^\circ$, or $270^\circ$. As Figure \ref{fig:square_ori} indicates, each square is labeled with a dot, which then induces an orientation value on the square once it is unfolded onto the 2D plane. Naturally, this requires a labeling of squares on the box with such dots. Any consistent labeling will work. We adopt the convention that the dot is labeled at the corner that is diagonally the furthest away from the origin when the box is placed in the positive octant. Equivalently, one can directly extend the labeling shown in Figure \ref{fig:ori_unfold_box} to general boxes. Once such a labeling has been fixed, it is clear that any unfolding of the box will induce an orientation assignment $o : B \to  \{1, 2, 3, 4\}$. 

Importantly, any orientation assignment $o \colon B \to \{1, 2, 3, 4\}$ induced by a valid unfolding will necessarily preserve \emph{``relative orientations''} between connected squares. For a pair of connected squares $\sqa, \sqb$, define the relative orientation of $\sqb$ with respect to $\sqa$, $r_{\sqa}(\sqb)$, to be the orientation of $\sqb$ if $\sqa$ was rotated to have orientation $0$ (\Cref{fig:ori_unfold_net}). Note that $r_{\sqa}(\sqb)$ \emph{only} depends on the canonical labeling chosen for the box, and in particular does not depend on potential unfoldings. Since unfolding is an orientation-preserving geometric transformation, for any edge $e = \{\sqa, \sqb\}$ that is \emph{not cut}, the relative orientations between $\sqa, \sqb$ must remain invariant. Thereby necessarily implying $o(\sqb) = o(\sqa) + r_{\sqa}(\sqb)$ (and vice versa), where addition is carried out in $\mathbb{Z}_4$. In fact, these are the only constraints that we enforce in our encoding for square orientations. The constraints are:

\begin{itemize}
  \item Variables $o_{s, d}$ for $s \in B$, $d \in \{1, 2, 3, 4\}$ encoding a function $o : B \to \{1, 2, 3, 4\}$ representing the orientation values. For this to be a well-defined function, we have the following constraints for all $s \in B$. 
      \[\sum_{d = 1}^4 o_{s, d} = 1\]
  \item For each edge $e = \{\sqa, \sqb\}$ in $B$ that is not cut (i.e. $e_{\sqa, \sqb}$ is true), it must be the case that $o(\sqb) = o(\sqa) + r_{\sqa}(\sqb)$ (and vice versa). This is encoded as follows for every $d \in \{1, 2, 3, 4\}$. 
      \[ \left(e_{\sqa, \sqb} \land o_{{s_1}, d} \rightarrow o_{\sqb, r_{\sqa}(\sqb) + d}\right) \land \left(e_{\sqa, \sqb} \land o_{\sqb, d} \rightarrow o_{\sqa, r_{\sqb}(\sqa) + d}\right)\]

  \end{itemize}

\subsection{Edge directions}

Let $G_B$ be the graph with the squares of box $B$ as vertices, and where neighboring squares have a graph edge if and only if their common geometrical edge is not cut. Recall now that forbidding cycles of cut edges is equivalent to making the graph $G_B$ connected, which is our goal. The SAT encoding in~\cite{tadaki2020searchdevelopmentsboxhaving} encodes graph connectivity by choosing one vertex $s^\star \in V(G_B)$ as the source of a \emph{Breadth First Search} (BFS), and then encoding that every vertex is reached by that BFS. 

Concretely, variables $t_{v, k}$ represent that vertex $v$ is reached on step $k$ or earlier of the BFS, with $k$ ranging up to $|V(G_B)| - 1$ in the worst case. Then, the encoding consists of:
\begin{itemize}
  \item The source vertex $s^\star$ is reached at step $0$, enforced by unit clause $t_{s^\star, 0}$.
  \item Each vertex is reached at some step, enforced by the formula $$\bigwedge_{v \in V(G_B)} \bigvee_{k = 0}^{|V(G_B)| - 1} t_{v, k}.$$
  \item Let $N(v)$ denote the set of four neighbors of $v \in V(G_B)$. A vertex $v$ is reached at step at most $k$ if and only if one of its neighbors (or itself) was reached at step at most $k-1$: 
  $$\bigwedge_{k = 1}^{|V(G_B)| - 1}\bigwedge_{v \in V(G_B)} \left( t_{v, k} \leftrightarrow \bigvee_{u \in N(v)} (t_{u, k-1} \land e_{u,v})\right).$$
\end{itemize}

As a result, the number of variables and clauses in their encoding is quadratic in $|V(G_B)|$, the number of squares of the box. It is possible to upper-bound $k \leq T$ for some $T \leq |V(G_B)| - 1$ to improve performance at the cost of potentially missing solutions, thus such bounds cannot be used if one wants to enumerate all solutions. \Cref{tab:BFS_times} illustrates this for a typical sub-problem (\Cref{sec:symmetry}) between boxes $\boxDims{1}{1}{7}, \boxDims{1}{3}{3}$.  

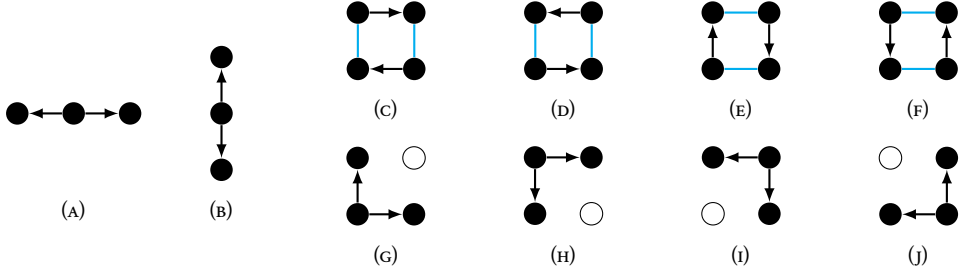
\begin{figure}[t]
  \centering
  \begin{minipage}{.3\textwidth}
  \begin{subfigure}{0.6\textwidth}
      \centering
      \begin{tikzpicture}[scale=0.75]
          \node[circle, fill=white, inner sep=3pt] (X) at (0,-1) {};
          \node[circle, fill=black, inner sep=3pt] (A) at (0,0) {};
          \node[circle, fill=black, inner sep=3pt] (B) at (1,0) {};
          \node[circle, fill=black, inner sep=3pt] (C) at (2,0) {};
          \node[circle, fill=white, inner sep=3pt] (Y) at (0,1) {};
          \draw[-latex,thick] (B) -- (A);
          \draw[-latex,thick] (B) -- (C); 
      \end{tikzpicture}
      \caption{}\label{fig:forbidden-subgraphs-a-body}
  \end{subfigure}
  \begin{subfigure}{0.3\textwidth}
      \centering
      \begin{tikzpicture}[scale=0.75]
          \node[circle, fill=black, inner sep=3pt] (A) at (0,-1) {};
          \node[circle, fill=black, inner sep=3pt] (B) at (0,0) {};
          \node[circle, fill=black, inner sep=3pt] (C) at (0,1) {};
          \draw[-latex,thick] (B) -- (A);
          \draw[-latex,thick] (B) -- (C); 
      \end{tikzpicture}
      \caption{}\label{fig:forbidden-subgraphs-b-body}
  \end{subfigure}
  \end{minipage}
  \begin{minipage}{.68\textwidth}
  \begin{subfigure}{0.24\textwidth}
      \centering
      \begin{tikzpicture}[scale=0.75]
          \node[circle, fill=black, inner sep=3pt] (A) at (0,0) {};
          \node[circle, fill=black, inner sep=3pt] (B) at (1,0) {};
          \node[circle, fill=black, inner sep=3pt] (C) at (0,1) {};
          \node[circle, fill=black, inner sep=3pt] (D) at (1,1) {};
          \draw[-latex,thick] (B) -- (A);
          \draw[-,thick, cyan] (B) -- (D);
          \draw[-,thick, cyan] (A) -- (C); 
          \draw[-latex,thick] (C) -- (D); 
      \end{tikzpicture}
      \caption{}\label{fig:forbidden-subgraphs-c-body}
  \end{subfigure}
  \begin{subfigure}{0.24\textwidth}
      \centering
      \begin{tikzpicture}[scale=0.75]
          \node[circle, fill=black, inner sep=3pt] (A) at (0,0) {};
          \node[circle, fill=black, inner sep=3pt] (B) at (1,0) {};
          \node[circle, fill=black, inner sep=3pt] (C) at (0,1) {};
          \node[circle, fill=black, inner sep=3pt] (D) at (1,1) {};
          \draw[-latex,thick] (A) -- (B);
          \draw[-,thick, cyan] (D) -- (B);
          \draw[-,thick, cyan] (C) -- (A); 
          \draw[-latex,thick] (D) -- (C); 
      \end{tikzpicture}
      \caption{}\label{fig:forbidden-subgraphs-d-body}
  \end{subfigure}
      \begin{subfigure}{0.24\textwidth}
          \centering
          \begin{tikzpicture}[scale=0.75]
              \node[circle, fill=black, inner sep=3pt] (A) at (0,0) {};
              \node[circle, fill=black, inner sep=3pt] (B) at (1,0) {};
              \node[circle, fill=black, inner sep=3pt] (C) at (0,1) {};
              \node[circle, fill=black, inner sep=3pt] (D) at (1,1) {};
              \draw[-,thick, cyan] (A) -- (B);
              \draw[-latex,thick] (D) -- (B);
              \draw[-latex,thick] (A) -- (C); 
              \draw[-,thick, cyan] (D) -- (C); 
          \end{tikzpicture}
          \caption{}\label{fig:forbidden-subgraphs-e-body}
      \end{subfigure}
          \begin{subfigure}{0.24\textwidth}
              \centering
              \begin{tikzpicture}[scale=0.75]
                  \node[circle, fill=black, inner sep=3pt] (A) at (0,0) {};
                  \node[circle, fill=black, inner sep=3pt] (B) at (1,0) {};
                  \node[circle, fill=black, inner sep=3pt] (C) at (0,1) {};
                  \node[circle, fill=black, inner sep=3pt] (D) at (1,1) {};
                  \draw[-,thick, cyan] (A) -- (B);
                  \draw[-latex,thick] (B) -- (D);
                  \draw[-latex,thick] (C) -- (A); 
                  \draw[-,thick, cyan] (D) -- (C); 
              \end{tikzpicture}
          \caption{}\label{fig:forbidden-subgraphs-f-body}
       \end{subfigure}
      
       \vspace{0.7em}

       \begin{subfigure}{0.24\textwidth}
        \centering
        \begin{tikzpicture}[scale=0.75]
            \node[circle, fill=black, inner sep=3pt] (A) at (0,0) {};
            \node[circle, fill=black, inner sep=3pt] (B) at (1,0) {};
            \node[circle, fill=black, inner sep=3pt] (C) at (0,1) {};
            \node[circle, draw, inner sep=3pt] (D) at (1,1) {};
            \draw[-latex,thick] (A) -- (B);
            \draw[-latex,thick] (A) -- (C); 
        \end{tikzpicture}
    \caption{}\label{fig:forbidden-subgraphs-g-body}
    \end{subfigure}
    \begin{subfigure}{0.24\textwidth}
      \centering
      \begin{tikzpicture}[scale=0.75]
          \node[circle, fill=black, inner sep=3pt] (A) at (0,0) {};
          \node[circle, draw, inner sep=3pt] (B) at (1,0) {};
          \node[circle, fill=black, inner sep=3pt] (C) at (0,1) {};
          \node[circle, fill=black, inner sep=3pt] (D) at (1,1) {};
          \draw[-latex,thick] (C) -- (A); 
          \draw[-latex,thick] (C) -- (D); 
      \end{tikzpicture}
  \caption{}\label{fig:forbidden-subgraphs-h-body}
  \end{subfigure}
  \begin{subfigure}{0.24\textwidth}
    \centering
    \begin{tikzpicture}[scale=0.75]
        \node[circle, draw, inner sep=3pt] (A) at (0,0) {};
        \node[circle, fill=black, inner sep=3pt] (B) at (1,0) {};
        \node[circle, fill=black, inner sep=3pt] (C) at (0,1) {};
        \node[circle, fill=black, inner sep=3pt] (D) at (1,1) {};
        \draw[-latex,thick] (D) -- (B);
        \draw[-latex,thick] (D) -- (C); 
    \end{tikzpicture}
\caption{}\label{fig:forbidden-subgraphs-i-body}
\end{subfigure}
\begin{subfigure}{0.24\textwidth}
  \centering
  \begin{tikzpicture}[scale=0.75]
      \node[circle, fill=black, inner sep=3pt] (A) at (0,0) {};
      \node[circle, fill=black, inner sep=3pt] (B) at (1,0) {};
      \node[circle, draw, inner sep=3pt] (C) at (0,1) {};
      \node[circle, fill=black, inner sep=3pt] (D) at (1,1) {};
      \draw[-latex,thick] (B) -- (A);
      \draw[-latex,thick] (B) -- (D);
  \end{tikzpicture}
\caption{}\label{fig:forbidden-subgraphs-j-body}
\end{subfigure}
\end{minipage}
  \caption{The set $\mathcal{F}$ of forbidden subgraphs. The blue edges represent that the graph belongs to $\mathcal{F}$ regardless of the orientation of its blue edges.
  For the forbidden patterns \ref{fig:forbidden-subgraphs-g-body} to \ref{fig:forbidden-subgraphs-j-body}, the white node must not be present in the graph.
  Blocking the patterns \ref{fig:forbidden-subgraphs-a-body} to \ref{fig:forbidden-subgraphs-b-body} requires one binary clause per pattern. Blocking the patterns \ref{fig:forbidden-subgraphs-c-body} to \ref{fig:forbidden-subgraphs-f-body} requires two ternary clauses. Finally, blocking the patterns \ref{fig:forbidden-subgraphs-g-body} to \ref{fig:forbidden-subgraphs-j-body} requires two ternary clauses. The ternary clauses make use of the observation that it is impossible to preserve exactly 3 of the 4 edges. 
  }
  \label{fig:forbidden-subgraphs-body}
\end{figure}

We approximate connectivity using a constant number of clauses per square. First, for each neighboring pair of squares $(\sqa, \sqb)$, we create two variables, $d_{\sqa, \sqb}$ and $d_{\sqb, \sqa}$ representing that the preserved edge $\{\sqa, \sqb\}$ will be directed from $\sqa$ toward $\sqb$ (or from $\sqb$ toward $\sqa$, respectively). We have clauses $(\overline d_{\sqa, \sqb} \lor \overline d_{\sqb, \sqa})$ to prevent both directions per preserved edge. 
If an edge $\{\sqa, \sqb\}$ is preserved, it has at least one direction, this is enforced by $(e_{\sqa, \sqb} \rightarrow d_{\sqa, \sqb} \lor d_{\sqb, \sqa})$. Finally, if the edge $\{\sqa, \sqb\}$ has a direction, it is necessarily preserved. This is enforced by $d_{\sqa, \sqb} \rightarrow e_{\sqa, \sqb}$ and $d_{\sqb, \sqa} \rightarrow e_{\sqa, \sqb}$. 

Now, we encode that a special vertex $s^\star$ is a \emph{unique sink} by enforcing that (i) it has outdegree $0$ according to the edge directions, so $\bigwedge_{u \in N(s^\star)} \overline d_{s^\star, u}$,
and (ii) every other vertex has outdegree at least $1$:
$$\bigwedge_{v \in V(G_B) \setminus \{s^\star\}} \left.\bigvee_{u \in N(v)} d_{v, u} \right.$$
Note that for us, $s^\star$ is a \emph{``sink''}, instead of a \emph{``source''}. To approximate further that our edge directions correspond to a \emph{reverse BFS} from $s^\star$, we forbid all local patterns depicted in~\Cref{fig:forbidden-subgraphs-body} as subgraphs. Since this forbids a constant number of possibilities around each vertex, it totals $O(|V(G_B)|)$ clauses.
While this is insufficient in theory to guarantee connectivity, it almost always results in connected nets in practice. On the other hand, we prove in~\Cref{thm:forbidden-patterns} that these constraints are sound, meaning that every valid unfolding must satisfy them. 

\begin{table}[h]
  \centering
  \caption{Runtimes with BFS/forbidding local patterns at $|V(G_B)| = 30$.}
  \vspace{0.1em}
  \begin{tabular}{cccccc|c}
  \toprule
  ~~$T = 15$~~ & ~~$T = 16$~~ & ~~$T = 17$~~ & ~~$T = 18$~~ & ~~$T = 19$~~ & ~~$T = 20$~~ & ~~\bf{Local patterns}~~ \\
   \midrule
    79.66\,s & 240.53\,s & 385.56\,s & 504.43\,s & 947.83\,s & 1823.88\,s & \bf{36.51\,s}\\
  \bottomrule
  \end{tabular}
  \label{tab:BFS_times}
\end{table}

\section{Common unfoldings of two boxes}\label{sec:common_unfoldings}
\input{sections/common_unfoldings.tex}

\section{Symmetry breaking}\label{sec:symmetry}
Geometrically, if a box $B$ unfolds into a net $N$, then the symmetries of $B$ can certainly be unfolded into the same net $N$. To efficiently enumerate all unfoldings, it is important to avoid the computation of repeated solutions by breaking these symmetries. In this section, we describe how symmetry breaking is carried out and show experimentally that this greatly improves the performance.

As a first step, notice that a geometric (2D) rotation of a net only affects the orientations of squares and preserves the edges. Indeed, constraints on the orientation of squares introduced in Section \ref{sec:local} only enforce relative equality, and therefore if $o \colon B \to \{1,2,3,4\}$ gives a satisfying assignment of orientations, then a constant shift (e.g. $o'(s) = o(s) + 1$ for all $s$) will also be satisfying and corresponds to a 2D rotation of the underlying net. Thus, we may without loss of generality pick a distinguished square $\hat{s} \in B$ and enforce $o(\hat{s}) = 0$ with a unit clause. 

\begin{figure}[b]
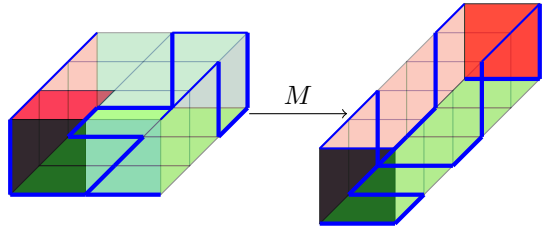

    \centering
    \switch{figures/symmetry}
    \caption{Image of pairs of squares under $M$, mapping the (black, red) pair on $B_2$ to the (black, red) pair on $B_1$.}
    \label{fig:mapping_pair}
\end{figure}

To further break symmetries when searching for common unfoldings (of two boxes $B_1, B_2$), consider the image of pairs of squares under the equivalence mapping $M$ (Section \ref{sec:common_unfoldings}) of a common unfolding. First, fix a pair of squares $(\sqa, \sqb)$ on $B_1$. Since $M$ is a bijection, some pair of squares $(\boxb{\sqa}, \boxb{\sqb})$ on $B_2$ satisfies $(M(\boxb{\sqa}), M(\boxb{\sqb})) = (\sqa, \sqb)$, as illustrated in Figure \ref{fig:mapping_pair}. If $Q \in \text{Sym}(B_2)$ is a symmetry of $B_2$, then certainly the same unfolding can be carried out on $Q(B_2)$ (as $Q$ is a symmetry), inducing a common unfolding between $Q(B_2)$ and $B_1$ where the pair $(Q^{-1}(\boxb{\sqa}), Q^{-1}(\boxb{\sqb}))$ now maps to $(\sqa, \sqb)$. Therefore, if $P_2$ denotes the set of all pairs of $B_2$ up to symmetry, then the search for common unfoldings can be reduced to the cases $(M(\boxb{\sqa}), M(\boxb{\sqb})) = (\sqa, \sqb)$ for every pair $(\boxb{\sqa}, \boxb{\sqb}) \in P_2$. Furthermore, if $(\sqa, \sqb)$ are chosen on $B_1$ such that $\sqa$ is in the orbit of $\sqb$ under symmetries of $B_1$, then it suffices to consider the pairs in $P_2$ as unordered, thereby reducing the number of cases by a factor of 2. Table \ref{fig:pairs_table} depicts the number of such pairs relative to the total number of pairs ${\text{Area}}\choose {2}$. As shown, the number of pairs up to symmetry is relatively small compared to the number of all pairs. 

\begin{table}[t]
    \centering
    \caption{Number of pairs}
    \begin{tabular}{@{}c l@{\hspace{10pt}}l cc@{}}
    \toprule
      ~Area~ & \multicolumn{2}{c}{~Dimensions~} & $\#$Pairs~~~ & Total\%~~ \\
     \midrule
     22 & $1 \! \times \! 1 \! \times \! 5$ & $1 \! \times \! 2 \! \times \! 3$ & 45 &  9.74\%\\
    
     30 & $1 \! \times \! 1 \! \times \! 7$ & $1 \! \times \! 3 \! \times \! 3$ & 47 & 5.40\% \\
    
     34 & $1 \! \times \! 1 \! \times \! 8$ & $1 \! \times \! 2 \! \times \! 5$ & 97 & 8.64\% \\
    
     38 & $1 \! \times \! 1 \! \times \! 9$ & $1 \! \times \! 3 \! \times \! 4$ & 120 & 8.53\% \\
    
     42 & $1 \! \times \! 1 \! \times \! 10$ & $2 \! \times \! 3 \! \times \! 3$ & 77 & 4.47\% \\
    
     46 & $1 \! \times \! 1 \! \times \! 11$ & $1 \! \times \! 2 \! \times \! 7$ & 169 & 8.16\% \\
    
     54 & $1 \! \times \! 1 \! \times \! 13$ & $1 \! \times \! 3 \! \times \! 6$ & 231 & 8.07\% \\
    
     58 & $1 \! \times \! 1 \! \times \! 14$ & $1 \! \times \! 2 \! \times \! 9$ & 261 & 7.89\% \\
    \bottomrule
    \end{tabular}
    \label{fig:pairs_table}
\end{table}

Finally, when the dimensions of $B_1$ is of the form $\boxDims{1}{1}{n}$, we always pick the pair $(\sqa, \sqb)$ to be the $1\!\times\!1$ faces. In addition to them being symmetric, $B_1$ has full rotational symmetry about these faces and therefore the orientation of $\sqa$ can be altered by applying such rotations \emph{without rotating the underlying net}. Consequently, this implies that we may always assume without loss of generality that $o(\sqa) = 0$, in addition to fixing the orientation of a square $\boxb{\sqa} \in B_2$. To summarize, when enumerating common unfoldings between two boxes $B_1, B_2$, the following symmetry-breaking procedure is carried out.

\begin{enumerate}
    \item Fix a pair of squares $(\sqa, \sqb)$ on $B_1$ that belong to the same equivalence class under symmetry. 
    \item If $B_1$ is of the form $\boxDims{1}{1}{n}$, choose $(\sqa, \sqb)$ to be the $1\!\times\!1$ faces and enforce $o(\sqa) = 0$ using a unit clause. 
    \item Let $P_2$ be the set of (unordered) pairs of squares on $B_2$ unique up to symmetry. For each pair $(\boxb{\sqa}, \boxb{\sqb}) \in P_2$, encode $M(\boxb{\sqa}) = \sqa, M(\boxb{\sqb}) = \sqb, o(\boxb{\sqa}) = 0$ using 3 unit clauses.
    \item Solve the corresponding sub-problem for each pair $(\boxb{\sqa}, \boxb{\sqb}) \in P_2$. Note that since each sub-problem is independent, they can be run in parallel. 
\end{enumerate}

\section{Experimental results}\label{sec:experiments}
    \begin{figure}[b]
        \centering
            \switch{figures/overlap_standalone}
        \caption{Overlapping net (left) and touching net (right) of $\boxDims{1}{1}{11}$.}
        \label{fig:touch_overlap_nets}
    \end{figure}

Using the encodings given in \Cref{sec:local,sec:common_unfoldings} without symmetry-breaking, we were able to compute unfoldings between all pairs of boxes with equal area up to $86$ (\Cref{tab:all_2_unfold}), thereby establishing $\minImpossible(2) > 86$. As indicated in \Cref{tab:all_2_unfold}, most of such unfoldings with high area were previously unknown. 
The first missing entry is the pair $\boxDims{1}{2}{14}, \boxDims{2}{2}{10}$ at area 88. It is worth noting that common unfoldings can always be scaled up by subdividing squares, e.g. a common unfolding of $\boxDims{1}{1}{5}, \boxDims{1}{2}{3}$ directly implies the existence of a common unfolding for $\boxDims{2}{2}{10}, \boxDims{2}{4}{6}$ by dividing each square into 4 sub-squares \cite{randomAlg2008}. 
\begin{table}[t!]
    \centering
\caption{Existence of common unfoldings.}\label{tab:all_2_unfold}
\vspace{5pt}
\begin{tabular}{@{}c l@{\hspace{5pt}}lc@{}}
    \toprule
    Area\! & \multicolumn{2}{c}{~Dimensions~} & \!First\\
    \midrule
    22 & $\boxDims{1}{1}{5}$ & $\boxDims{1}{2}{3}$ &  \cite{randomAlg2008}\\
    30 & $\boxDims{1}{1}{7}$ & $\boxDims{1}{3}{3}$ & \cite{randomAlg2008}\\
    34 &$\boxDims{1}{1}{8}$ & $\boxDims{1}{2}{5}$  & \cite{randomAlg2008}\\
    38 & $\boxDims{1}{1}{9}$ & $\boxDims{1}{3}{4}$ &  \cite{randomAlg2008}\\
    40 & $\boxDims{1}{2}{6}$ & $\boxDims{2}{2}{4}$ &  \cite{tadaki2020searchdevelopmentsboxhaving}\\
    42 & $\boxDims{1}{1}{10}$ & $\boxDims{2}{3}{3}$ &  $\checkmark$\\
    46 & $\boxDims{1}{1}{11}$ & $\boxDims{1}{3}{5}$ &  \cite{randomAlg2008}\\
    46 & $\boxDims{1}{1}{11}$ & $\boxDims{1}{2}{7}$ &  $\checkmark$\\
    46 & $\boxDims{1}{2}{7}$ & $\boxDims{1}{3}{5}$ & \cite{randomAlg2008}\\
    48 & $\boxDims{1}{4}{4}$ & $\boxDims{2}{2}{5}$ &  $\checkmark$\\
    54 & $\boxDims{1}{1}{13}$ & $\boxDims{3}{3}{3}$ &  \cite{randomAlg2008}\\
    54 & $\boxDims{1}{1}{13}$ & $\boxDims{1}{3}{6}$ &  \cite{randomAlg2008}\\
    54 & $\boxDims{1}{3}{6}$ & $\boxDims{3}{3}{3}$ &  \cite{randomAlg2008}\\

    58 &$\boxDims{1}{1}{14}$ & $\boxDims{1}{2}{9}$ &  $\checkmark$\\
    58 &$\boxDims{1}{1}{14}$ & $\boxDims{1}{4}{5}$ &  \cite{randomAlg2008}\\
    58 &$\boxDims{1}{2}{9}$ & $\boxDims{1}{4}{5}$ &  $\checkmark$\\
    62 &$\boxDims{1}{1}{15}$ & $\boxDims{1}{3}{7}$ &  $\checkmark$\\
    62 &$\boxDims{1}{1}{15}$ & $\boxDims{2}{3}{5}$ &  $\checkmark$\\
    62 & $\boxDims{1}{3}{7}$ & $\boxDims{2}{3}{5}$ &  \cite{randomAlg2008}\\
    64 & $\boxDims{1}{2}{10}$ & $\boxDims{2}{2}{7}$ &  \cite{randomAlg2008}\\
    64 & $\boxDims{1}{2}{10}$ & $\boxDims{2}{4}{4}$ &  $\checkmark$\\
    64 & $\boxDims{2}{2}{7}$ & $\boxDims{2}{4}{4}$ &  \cite{randomAlg2008}\\
    66 & $\boxDims{1}{1}{16}$ & $\boxDims{3}{3}{4}$ &  $\checkmark$\\
    70 & $\boxDims{1}{1}{17}$ & $\boxDims{1}{2}{11}$ &  $\checkmark$\\
    70 & $\boxDims{1}{1}{17}$ & $\boxDims{1}{3}{8}$ &  $\checkmark$\\
    70 & $\boxDims{1}{1}{17}$ & $\boxDims{1}{5}{5}$ &  \cite{randomAlg2008}\\
    70 & $\boxDims{1}{2}{11}$ & $\boxDims{1}{3}{8}$ &  \cite{randomAlg2008}\\
    70 & $\boxDims{1}{2}{11}$ & $\boxDims{1}{5}{5}$ &  $\checkmark$\\
    70 & $\boxDims{1}{3}{8}$ & $\boxDims{1}{5}{5}$ &  $\checkmark$\\
    72 & $\boxDims{2}{2}{8}$ & $\boxDims{2}{3}{6}$ &  $\checkmark$\\
    76 & $\boxDims{1}{2}{12}$ & $\boxDims{2}{4}{5}$ &  $\checkmark$\\
     \bottomrule
     
     \end{tabular}
    \begin{tabular}{@{}c l@{\hspace{6pt}}lc@{}}
    \toprule
    Area\! & \multicolumn{2}{c}{~Dimensions~} & \!First\\
    \midrule
    %
    78 & $\boxDims{1}{1}{19}$ & $\boxDims{1}{3}{9}$ &  $\checkmark$\\
    78 & $\boxDims{1}{1}{19}$ & $\boxDims{1}{4}{7}$ &  $\checkmark$\\
    78 & $\boxDims{1}{1}{19}$ & $\boxDims{3}{3}{5}$ &  $\checkmark$\\
    78 & $\boxDims{1}{3}{9}$ & $\boxDims{1}{4}{7}$ &  $\checkmark$\\
    78 & $\boxDims{1}{3}{9}$ & $\boxDims{3}{3}{5}$ &  $\checkmark$\\
    78 & $\boxDims{1}{4}{7}$ & $\boxDims{3}{3}{5}$ &  $\checkmark$\\
    80 & $\boxDims{2}{2}{9}$ & $\boxDims{3}{4}{4}$ &  $\checkmark$\\
    82 & $\boxDims{1}{1}{20}$ & $\boxDims{1}{2}{13}$ &  $\checkmark$\\
    82 & $\boxDims{1}{1}{20}$ & $\boxDims{1}{5}{6}$ &  $\checkmark$\\
    82 & $\boxDims{1}{1}{20}$ & $\boxDims{2}{3}{7}$ &  $\checkmark$\\
    82 & $\boxDims{1}{2}{13}$ & $\boxDims{1}{5}{6}$ &  $\checkmark$\\
    82 & $\boxDims{1}{2}{13}$ & $\boxDims{2}{3}{7}$ &  $\checkmark$\\
    82 & $\boxDims{1}{5}{6}$ & $\boxDims{2}{3}{7}$ &  $\checkmark$\\
    86 & $\boxDims{1}{1}{21}$ & $\boxDims{1}{3}{10}$ &  $\checkmark$\\
    86 & $\boxDims{1}{2}{14}$ & $\boxDims{1}{4}{8}$ &  $\checkmark$\\
    86 & $\boxDims{1}{2}{14}$ & $\boxDims{2}{4}{6}$ &  $\checkmark$\\
    88 & $\boxDims{1}{4}{8}$ & $\boxDims{2}{2}{10}$ &  \cite{randomAlg2008}\\
    88 & $\boxDims{1}{4}{8}$ & $\boxDims{2}{4}{6}$ &  $\checkmark$\\
    88 & $\boxDims{2}{2}{10}$ & $\boxDims{2}{4}{6}$ &  \cite{randomAlg2008}\\
    90 & $\boxDims{2}{5}{5}$ & $\boxDims{3}{3}{6}$ &  $\checkmark$\\
    94 & $\boxDims{1}{5}{7}$ & $\boxDims{3}{4}{5}$ &  $\checkmark$\\
    94 & $\boxDims{1}{3}{11}$ & $\boxDims{3}{4}{5}$ &  $\checkmark$\\
    94 & $\boxDims{1}{3}{11}$ & $\boxDims{1}{5}{7}$ &  $\checkmark$\\

    96 & $\boxDims{1}{6}{6}$ & $\boxDims{2}{2}{11}$ &  $\checkmark$\\
    96 & $\boxDims{1}{6}{6}$ & $\boxDims{4}{4}{4}$ &  $\checkmark$\\
    100 & $\boxDims{1}{2}{16}$ & $\boxDims{2}{4}{7}$ &  $\checkmark$\\
    102 & $\boxDims{2}{3}{9}$ & $\boxDims{3}{3}{7}$ &  $\checkmark$\\
    102 & $\boxDims{1}{3}{12}$ & $\boxDims{2}{3}{9}$ &  $\checkmark$\\
    104 & $\boxDims{2}{2}{12}$ & $\boxDims{2}{5}{6}$ &  $\checkmark$\\
    106 & $\boxDims{1}{1}{26}$ & $\boxDims{1}{2}{17}$ &  $\checkmark$\\
    106 & $\boxDims{1}{2}{17}$ & $\boxDims{1}{5}{8}$ &  $\checkmark$\\
        \bottomrule
        \end{tabular}
    \begin{tabular}{@{}c l@{\hspace{6pt}}lc@{}}
    \toprule
    Area\! & \multicolumn{2}{c}{~Dimensions~} & \!First\\
    \midrule

    108 & $\boxDims{1}{4}{10}$ & $\boxDims{3}{4}{6}$ &  $\checkmark$\\
    110 & $\boxDims{1}{3}{13}$ & $\boxDims{1}{6}{7}$ &  $\checkmark$\\
    110 & $\boxDims{1}{3}{13}$ & $\boxDims{3}{5}{5}$ &  $\checkmark$\\
    112 & $\boxDims{2}{2}{13}$ & $\boxDims{2}{4}{8}$ &  $\checkmark$\\
    112 & $\boxDims{2}{3}{10}$ & $\boxDims{4}{4}{5}$ &  $\checkmark$\\
    112 & $\boxDims{1}{2}{18}$ & $\boxDims{4}{4}{5}$ &  $\checkmark$\\
    118 & $\boxDims{1}{4}{11}$ & $\boxDims{1}{5}{9}$ &  $\checkmark$\\
    118 & $\boxDims{1}{3}{14}$ & $\boxDims{1}{5}{9}$ &  $\checkmark$\\
    118 & $\boxDims{1}{4}{11}$ & $\boxDims{2}{5}{7}$ &  $\checkmark$\\
    118 & $\boxDims{1}{3}{14}$ & $\boxDims{2}{5}{7}$ &  $\checkmark$\\
    120 & $\boxDims{2}{2}{14}$ & $\boxDims{2}{6}{6}$ &  $\checkmark$\\
    124 & $\boxDims{1}{6}{8}$ & $\boxDims{2}{4}{9}$ &  $\checkmark$\\
    126 & $\boxDims{1}{3}{15}$ & $\boxDims{3}{5}{6}$ &  $\checkmark$\\
    126 & $\boxDims{1}{7}{7}$ & $\boxDims{3}{5}{6}$ &  $\checkmark$\\
    126 & $\boxDims{1}{7}{7}$ & $\boxDims{3}{3}{9}$ &  $\checkmark$\\
    126 & $\boxDims{3}{3}{9}$ & $\boxDims{3}{5}{6}$ &  $\checkmark$\\
    128 & $\boxDims{1}{4}{12}$ & $\boxDims{4}{4}{6}$ &  $\checkmark$\\
    128 & $\boxDims{2}{2}{15}$ & $\boxDims{4}{4}{6}$ &  $\checkmark$\\
    132 & $\boxDims{2}{3}{12}$ & $\boxDims{2}{5}{8}$ &  $\checkmark$\\
    136 & $\boxDims{2}{4}{10}$ & $\boxDims{2}{6}{7}$ &  $\checkmark$\\
    136 & $\boxDims{2}{6}{7}$ & $\boxDims{3}{4}{8}$ &  $\checkmark$\\
    138 & $\boxDims{1}{4}{13}$ & $\boxDims{1}{6}{9}$ &  $\checkmark$\\
    138 & $\boxDims{1}{6}{9}$ & $\boxDims{3}{3}{10}$ &  $\checkmark$\\
    142 & $\boxDims{1}{5}{11}$ & $\boxDims{1}{7}{8}$ &  $\checkmark$\\
    142 & $\boxDims{2}{3}{13}$ & $\boxDims{3}{5}{7}$ &  $\checkmark$\\
    142 & $\boxDims{1}{5}{11}$ & $\boxDims{2}{3}{13}$ &  $\checkmark$\\
    142 & $\boxDims{1}{7}{8}$ & $\boxDims{2}{3}{13}$ &  $\checkmark$\\
    142 & $\boxDims{1}{5}{11}$ & $\boxDims{3}{5}{7}$ &  $\checkmark$\\
    144 & $\boxDims{2}{2}{17}$ & $\boxDims{3}{6}{6}$ &  $\checkmark$\\
    148 & $\boxDims{1}{4}{14}$ & $\boxDims{4}{5}{6}$ &  $\checkmark$\\
    174 & $\boxDims{3}{3}{13}$ & $\boxDims{3}{5}{9}$ &  $\checkmark$\\
    \bottomrule
\end{tabular}
\end{table}
\begin{table}[t!]
    \caption{Exhaustive enumeration of solutions. The columns labeled $|S|$, $|T|$, and $|O|$ show the 
    number of simple, touching, and overlapping solutions, respectively.}
        \label{tab:all_sols_times}
    \vspace{5pt}
    \begin{tabular}{@{}c l@{\hspace{6pt}}l rrrrrrc@{}}
     \toprule
      Area~ & \multicolumn{2}{c}{~Dimensions~} & ~~$\#$SAT & Unique & $|S|$~~ & $|T|$ & $|O|$ & ~Time~~~~~~~ & First\\
     \midrule
     22 & $1 \! \times \! 1 \! \times \! 5$ & $1 \! \times \! 2 \! \times \! 3$ & 3\,942 & 2\,303 & 2\,263 & 27 & 13 & 2 mins (local) & \cite{22allsols2011}\\
     30 & $1 \! \times \! 1 \! \times \! 7$ & $1 \! \times \! 3 \! \times \! 3$ & 1\,790 & 1\,080 & 1\,070 & 10 & 0 & \,~10 mins (local) & \cite{XU20171}\\
     34 & $1 \! \times \! 1 \! \times \! 8$ & $1 \! \times \! 2 \! \times \! 5$ & 131\,054 & 35\,700 & 35\,675 & 22 & 3 & 1 hour (local) & $\checkmark$\\
     38 & $1 \! \times \! 1 \! \times \! 9$ & $1 \! \times 3 \! \times \! 4$ & 5\,854 & 4\,509 & 4\,469 & 36 & 4 & 6 hrs (local) & $\checkmark$\\
     42 & $1 \! \times \! 1 \! \times \! 10$ & $2 \! \times 3 \! \times \! 3$ & 128\,558 & 111\,948 & 111\,387 & \,~559 & 2 & 1 day (local) & $\checkmark$\\
     46 & $1 \! \times \! 1 \! \times \! 11$ & $1 \! \times \! 2 \! \times \! 7$ & 16\,928 & 15\,236 & 14\,971 & 16 & \,~249 & 3 hrs (PSC) & $\checkmark$\\
     54 & $1 \! \times \! 1 \! \times \! 13$ & $1 \! \times \! 3 \! \times \! 6$ & 56\,087 & 51\,884 & 51\,836 & 48 & 0 & 1 day (PSC) & $\checkmark$\\
     58 & $1 \! \times \! 1 \! \times \! 14$ & $1 \! \times \! 2 \! \times \! 9$ & \,~2\,150\,373 & \,~551\,935 & \,~551\,923 & 7 & 5 & 2 days (PSC) & $\checkmark$\\
     \bottomrule
    \end{tabular}
    \end{table}
    
    With the symmetry-breaking procedure described in~\Cref{sec:symmetry}, we were able to completely enumerate all solutions for the pairs of boxes shown in~\Cref{tab:all_sols_times}. Local computations were conducted using a standard laptop, computations labeled with ``cluster'' were conducted on a supercomputer~\cite{brownBridges2PlatformRapidlyEvolving2021} running 64 SAT solvers in parallel. Solutions were obtained by enumerating all solutions for each fixed pair of squares on the second box (\Cref{sec:symmetry}), using an allsat variant of CaDiCaL \cite{BiereFallerFazekasFleuryFroleyks-CAV24}, available at \url{https://github.com/jreeves3/allsat-cadical}. Through these computations, we found unfoldings with diameter close to the surface area (see~\Cref{sec:extra_unfolding}) making it difficult to compute using a BFS-style encoding for connectivity. 
    
    In addition to standard nets (simply connected, no overlaps), our encoding is also capable of finding non-standard unfoldings, which we classify into ``touching'' and ``overlapping''. Both are nets in the sense that one can obtain them by unfolding a box, however they are non-standard as the resulting shape is not simply-connected. The gray square in the overlapping net shown in~\Cref{fig:touch_overlap_nets} is an overlapping green and yellow square. Touching nets can always fold into the original box if cuts were allowed, e.g. the touching net shown in~\Cref{fig:touch_overlap_nets} can fold into $\boxDims{1}{1}{11}$ if a cut is made along the red edge. 

Beyond common unfoldings, we also enumerated all nets for boxes of the form $\boxDims{1}{1}{n}$ for $n \leq 4$ (\Cref{tab:11n_nets}). To the best of our knowledge, the number of such nets was previously unknown except for $n = 1$. Such computations were carried out locally on a standard laptop. 
\begin{table}[t]
    \centering
    \caption{Exhaustive enumeration of $\boxDims{1}{1}{n}$ nets}
     \label{tab:11n_nets}
    \begin{tabular}{@{}crrrrrr@{}}
     \toprule
       ~Dimensions~ & ~~~~$\#$SAT & ~~Unique~ & ~~~$|S|$ & ~$|T|$ & ~$|O|$ & ~~~Time~ \\
     \midrule
     $ \boxDims{1}{1}{1}$ & 384 & 11 & 11 & 0 & 0 & 0.05\,s\\
      $\boxDims{1}{1}{2}$ & 12\,124 & 723 & 723 & 0 & 0 & 1.79\,s\\
     $\boxDims{1}{1}{3}$ & 240\,304 & 15\,061 & 14\,978 & 79 & 4 & 77.78\,s\\
    $\boxDims{1}{1}{4}$ & 3\,708\,380 & 231\,310 & 228\,547 & 2\,603 & 160 & 9 hrs\\
     \bottomrule
    \end{tabular}
\end{table}

\section{Related work}\label{sec:related}
The algorithmic search for common unfoldings of non-isomorphic boxes has been investigated in many earlier works \cite{22allsols2011,randomAlg2008,threeBoxes2012,tadaki2020searchdevelopmentsboxhaving,ueharaSurvey2015,XU20171} using various techniques. One body of work is the complete enumeration of all unfoldings \cite{22allsols2011,XU20171}, computing all possible unfoldings between two boxes. The best previous result was due to \cite{XU20171}, providing a complete enumeration of all unfoldings at area 30. Our efficient approach allows us to completely enumerate all unfoldings up to surface area 58, providing complete enumerations for many pairs of boxes that were previously unknown (\Cref{tab:all_sols_times}). Furthermore, solutions of area $22$ were enumerated by \cite{22allsols2011} taking 10 hours, whereas our approach takes 2 minutes. Solutions of area $30$ were enumerated by \cite{XU20171} taking 10 days using a computer with 128 GB memory, our approach only takes 10 minutes using a standard laptop.

We were also able to find many new common unfoldings between boxes that were previously unknown. The best lower-bound for $\minImpossible(2)$ from earlier works was $40$ and already at area $42$ it was not known if boxes of dimensions $\boxDims{1}{1}{10},\boxDims{2}{3}{3}$ admit a common unfolding. Using our approach, we compute all possible common unfoldings between all boxes up to surface area $86$, more than doubling the previous best bound.

The use of SAT solvers to find common unfoldings has also been explored in earlier work \cite{tadaki2020searchdevelopmentsboxhaving}, our approach has two fundamental improvements over the previous approach. The first being the use of implicit equivalences which alleviates the need to consider nets as subsets of the 2D plane (\Cref{sec:common_unfoldings}). This significantly improves the efficiency of the encodings by eliminating the auxiliary variables needed to represent the nets in 2D. Secondly, the use of local constraints (\Cref{sec:local}) allows us to efficiently detect connectedness in the cut edges of boxes, and furthermore this does not place an \emph{a priori} restriction on the radius of the underlying net. Thereby allowing us to completely enumerate \emph{all solutions} between boxes and establishing $\minPossible(3) > 58$ rather than only finding solutions with a bounded radius. 
One can theoretically enumerate all solutions by using the surface area as the upper-bound, but this is too costly on performance \cite{tadaki2020searchdevelopmentsboxhaving}.

\section{Concluding remarks}\label{sec:conclusions}
We have presented a new SAT-based approach for finding and enumerating polyominos that can be folded into multiple non-isomorphic boxes, outperforming previous approaches. 
We have introduced a technique that could be applicable to searching for other combinatorial objects: approximating constraints that are hard to encode, or result in large numbers of clauses, with simpler local constraints, and then discard the solutions that do not satisfy the original constraints. 

Further supporting the correctness of our encoding, we were able to reproduce the common unfolding of $3$ boxes at area $532$ \cite{threeBoxes2012} as shown in \Cref{fig:532_unfold}. This was obtained by specifying the cut edges for one of the boxes ($\boxDims{2}{13}{16}$) with unit clauses and using a SAT solver to solve for the remaining variables.

The quest for the smallest area allowing a common unfolding of three boxes remains open, and part of our future work. Another promising direction is to formally verify our work, in the line of~\cite{subercaseaux_et_al:LIPIcs.ITP.2024.35}, given how intricate the encoding is and the delicate arguments for its completeness. We suspect that a particularly challenging aspect of that verification will be to formally define \emph{(un)foldings} (see Demaine and O'Rourke~\cite[Ch. 15]{demaine2007geometric}).

\begin{figure}[h!]
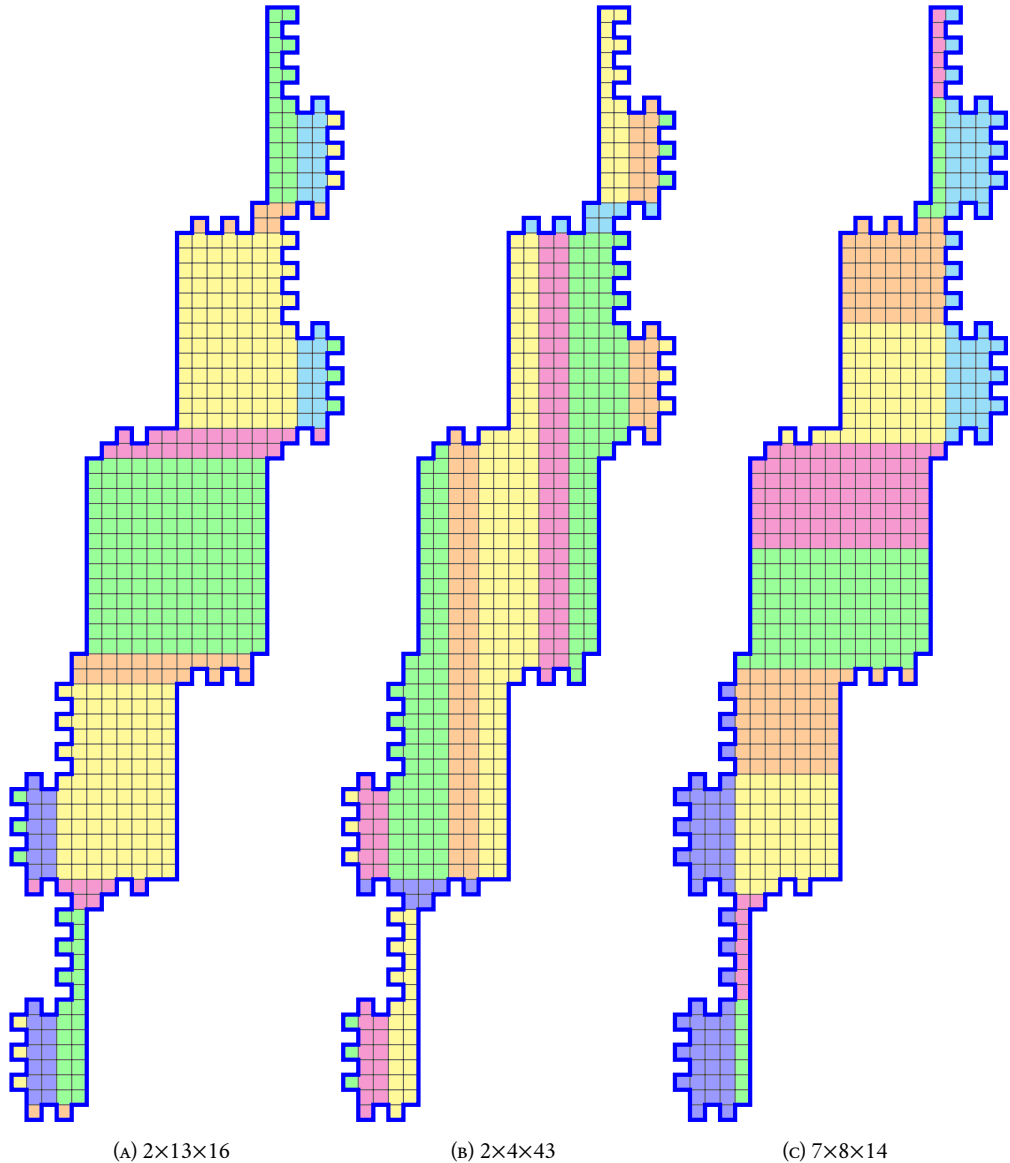

    \begin{subfigure}{0.31\textwidth}
        \centering
        \switch{figures/area_532_21316}
            \caption{$\boxDims{2}{13}{16}$}
    \end{subfigure}
    \begin{subfigure}{0.31\textwidth}
        \centering
        \switch{figures/area_532_2443}
            \caption{$\boxDims{2}{4}{43}$}  
    \end{subfigure}
    \begin{subfigure}{0.31\textwidth}
        \centering
        \switch{figures/area_532_7814}
            \caption{$\boxDims{7}{8}{14}$} 
    \end{subfigure}
    \caption{Common unfolding of 3 boxes at area 532 \cite{threeBoxes2012}.}\label{fig:532_unfold}
\end{figure}

\section*{Acknowledgements}
We thank Joseph Reeves for implementing solution enumeration in CaDiCaL, which was crucial to
performing the experiments. 
This work was supported by the U. S. National Science Foundation under grant DMS-2434625.

%
%
%

\newpage

\bibliographystyle{splncs04}
\bibliography{bibliography}

\begin{thebibliography}{10}
\providecommand{\url}[1]{\texttt{#1}}
\providecommand{\urlprefix}{URL }
\providecommand{\doi}[1]{https://doi.org/#1}

\bibitem{22allsols2011}
Abel, Z., Demaine, E.D., Demaine, M.L., Matsui, H., Rote, G., Uehara, R.:
  Common developments of several different orthogonal boxes. In: Proceedings of
  the 23rd Annual Canadian Conference on Computational Geometry, Toronto,
  Ontario, Canada, August 10-12, 2011 (2011),
  \url{http://www.cccg.ca/proceedings/2011/papers/paper49.pdf}

\bibitem{BiereFallerFazekasFleuryFroleyks-CAV24}
Biere, A., Faller, T., Fazekas, K., Fleury, M., Froleyks, N., Pollitt, F.:
  {CaDiCaL 2.0}. In: Gurfinkel, A., Ganesh, V. (eds.) Computer Aided
  Verification - 36th International Conference, {CAV} 2024, Montreal, QC,
  Canada, July 24-27, 2024, Proceedings, Part {I}. Lecture Notes in Computer
  Science, vol. 14681, pp. 133--152. Springer (2024).
  \doi{10.1007/978-3-031-65627-9\_7}

\bibitem{brownBridges2PlatformRapidlyEvolving2021}
Brown, S.T., Buitrago, P., Hanna, E., Sanielevici, S., Scibek, R., Nystrom,
  N.A.: Bridges-2: {{A Platform}} for {{Rapidly-Evolving}} and {{Data Intensive
  Research}}. In: Practice and {{Experience}} in {{Advanced Research
  Computing}}. {{PEARC}} '21, Association for Computing Machinery, New York,
  NY, USA (2021). \doi{10.1145/3437359.3465593}

\bibitem{demaine2007geometric}
Demaine, E., O'Rourke, J.: Geometric Folding Algorithms: Linkages, Origami,
  Polyhedra. Cambridge University Press (2007)

\bibitem{gebserSATModuloGraphs2014}
Gebser, M., Janhunen, T., Rintanen, J.: {{SAT Modulo Graphs}}: {{Acyclicity}}.
  In: Ferm{\'e}, E., Leite, J. (eds.) Logics in {{Artificial Intelligence}}.
  pp. 137--151. Springer International Publishing, Cham (2014)

\bibitem{randomAlg2008}
Mitani, J., Uehara, R.: Polygons folding to plural incongruent orthogonal
  boxes. In: The 20th Canadian Conference on Computational Geometry (CCCG'08)
  (2008)

\bibitem{threeBoxes2012}
Shirakawa, T., Uehara, R.: Common developments of three different orthogonal
  boxes. In: The 24th Canadian Conference on Computational Geometry (CCCG'12)
  (2012)

\bibitem{subercaseaux_et_al:LIPIcs.ITP.2024.35}
Subercaseaux, B., Nawrocki, W., Gallicchio, J., Codel, C., Carneiro, M., Heule,
  M.J.H.: {Formal Verification of the Empty Hexagon Number}. In: 15th
  International Conference on Interactive Theorem Proving (ITP 2024). Leibniz
  International Proceedings in Informatics (LIPIcs), vol.~309, pp. 35:1--35:19.
  Dagstuhl, Germany (2024). \doi{10.4230/LIPIcs.ITP.2024.35}

\bibitem{tadaki2020searchdevelopmentsboxhaving}
Tadaki, R., Amano, K.: Search for developments of a box having multiple ways of
  folding by {SAT} solver (2020), \url{https://arxiv.org/abs/2005.02645}

\bibitem{ueharaSurvey2015}
Uehara, R.: A survey and recent results about common developments of two or
  more boxes. In: Origami {$^6$}: proceedings of the sixth international
  meeting on origami science, mathematics, and education (2015)

\bibitem{XU20171}
Xu, D., Horiyama, T., Shirakawa, T., Uehara, R.: Common developments of three
  incongruent boxes of area 30. Computational Geometry  \textbf{64},  1--12
  (2017). \doi{10.1016/j.comgeo.2017.03.001}

\bibitem{zhou_et_al:LIPIcs.SAT.2023.30}
Zhou, N.F., Wang, R., Yap, R.H.C.: {A Comparison of SAT Encodings for
  Acyclicity of Directed Graphs}. In: 26th International Conference on Theory
  and Applications of Satisfiability Testing (SAT 2023). Leibniz International
  Proceedings in Informatics (LIPIcs), vol.~271, pp. 30:1--30:9. Dagstuhl,
  Germany (2023). \doi{10.4230/LIPIcs.SAT.2023.30}

\end{thebibliography}

\ifcameraready
\else

\newpage

\appendix

\section{Proofs for local constraints}\label{sec:formalization}
This section presents the proofs missing from the main text, as well as additional results that were omitted due to space constraints.
First,~\Cref{subsec:1-sink} presents theoretical support for why the edge directions make disconnected solutions more unlikely.
Then,~\Cref{subsec:local-constraints} proves the correctness of the local constraints used in our SAT encoding.
Finally,~\Cref{sec:extra_unfolding} presents additional unfoldings for two boxes.

\subsection{1-Sink Argument}\label{subsec:1-sink}

Let us define $\mathcal{G}$ as the undirected graph whose vertex set is $\mathbb{N}^2$ and whose edge set is $\{ \{u, v\} \in \mathbb{N}^2  : \|u-v\|_1 = 1\}$.
We define an \emph{oriented polyomino} as a directed graph that can be obtained by orienting a finite induced subgraph of $\mathcal{G}$. An example is depicted in \Cref{fig:oriented-polyomino}. 

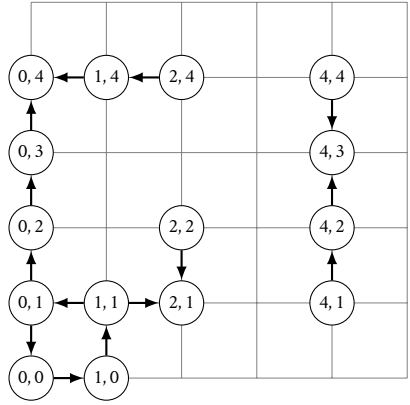
\begin{figure}[h]
    \centering
    \begin{tikzpicture}[scale=1,
        vertex/.style={circle,draw,fill=white,inner sep=2pt},
        every node/.style={font=\scriptsize}
    ]
        \draw[step=1cm,gray,very thin] (0,0) grid (5, 5);
    
        \node[vertex] (A) at (0,0) {$0,0$};
        \node[vertex] (B) at (1,0) {$1,0$};
        \node[vertex] (C) at (1,1) {$1,1$};
        \node[vertex] (D) at (0,1) {$0,1$};
        \node[vertex] (E) at (2,1) {$2,1$};
        \node[vertex] (F) at (2,2) {$2,2$};

        \node[vertex] (Q) at (4,4) {$4,4$};
        \node[vertex] (W) at (4,3) {$4,3$};
        \node[vertex] (R) at (4,2) {$4,2$};
        \node[vertex] (T) at (4,1) {$4,1$};

        \node[vertex] (x1) at (0, 2) {$0,2$};
        \node[vertex] (x2) at (0, 3) {$0,3$};
        \node[vertex] (x3) at (0, 4) {$0,4$};
        \node[vertex] (x4) at (1, 4) {$1,4$};
        \node[vertex] (x5) at (2, 4) {$2,4$};

        \draw[-latex,thick] (x1) -- (x2);
        \draw[-latex,thick] (x5) -- (x4);
        \draw[-latex,thick] (x4) -- (x3);
        \draw[-latex,thick] (x2) -- (x3);
        \draw[-latex,thick] (D) -- (x1);
        \draw[-latex,thick] (A) -- (B);
        \draw[-latex,thick] (B) -- (C);
        \draw[-latex,thick] (C) -- (D);
        \draw[-latex,thick] (D) -- (A);
        \draw[-latex,thick] (C) -- (E);
        \draw[-latex,thick] (F) -- (E);

        \draw[-latex,thick] (Q) -- (W);


        \draw[-latex,thick] (R) -- (W);
        \draw[-latex,thick] (T) -- (R);

    \end{tikzpicture}
    \caption{Example of an oriented polyominio.}\label{fig:oriented-polyomino}
\end{figure}

The \emph{hole number} of an oriented polyomino $P$, denoted $h(P)$, corresponds intuitively to the number of holes in the polyomino. Formally, $h(P)$ is the number of connected components of the subgraph of $\mathcal{G}$ induced by $V(\mathcal{G}) \setminus V(P)$ minus one.

We now define a set $\mathcal{F}$ of 18 \emph{``forbidden subgraphs''}, illustrated in \Cref{fig:forbidden-subgraphs}. An oriented polyomino $P$ is \emph{$\mathcal{F}$-avoidant} if none of its induced subgraphs are isomorphic to a graph in $\mathcal{F}$.

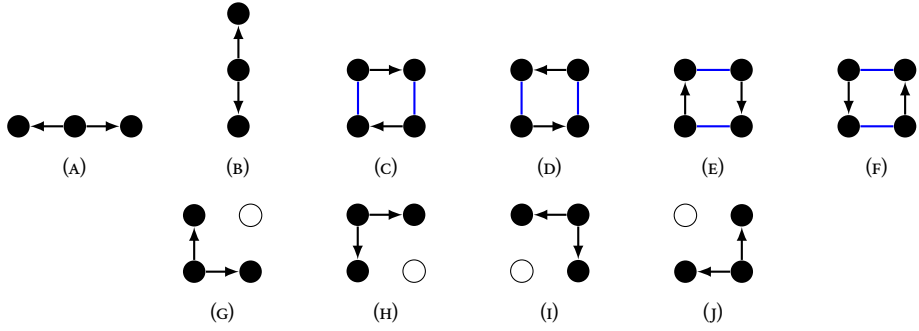
\begin{figure}[h]
    \centering
    \begin{subfigure}{0.18\textwidth}
        \centering
        \begin{tikzpicture}[scale=0.75]
            \node[circle, fill=black, inner sep=3pt] (A) at (0,0) {};
            \node[circle, fill=black, inner sep=3pt] (B) at (1,0) {};
            \node[circle, fill=black, inner sep=3pt] (C) at (2,0) {};
            \draw[-latex,thick] (B) -- (A);
            \draw[-latex,thick] (B) -- (C); 
        \end{tikzpicture}\label{fig:forbidden-subgraphs-a}
        \caption{}
    \end{subfigure}
    \begin{subfigure}{0.12\textwidth}
        \centering
        \begin{tikzpicture}[scale=0.75]
            \node[circle, fill=black, inner sep=3pt] (A) at (0,-1) {};
            \node[circle, fill=black, inner sep=3pt] (B) at (0,0) {};
            \node[circle, fill=black, inner sep=3pt] (C) at (0,1) {};
            \draw[-latex,thick] (B) -- (A);
            \draw[-latex,thick] (B) -- (C); 
        \end{tikzpicture}
        \caption{}\label{fig:forbidden-subgraphs-b}
    \end{subfigure}
    \begin{subfigure}{0.15\textwidth}
        \centering
        \begin{tikzpicture}[scale=0.75]
            \node[circle, fill=black, inner sep=3pt] (A) at (0,0) {};
            \node[circle, fill=black, inner sep=3pt] (B) at (1,0) {};
            \node[circle, fill=black, inner sep=3pt] (C) at (0,1) {};
            \node[circle, fill=black, inner sep=3pt] (D) at (1,1) {};
            \draw[-latex,thick] (B) -- (A);
            \draw[-,thick, blue] (B) -- (D);
            \draw[-,thick, blue] (A) -- (C); 
            \draw[-latex,thick] (C) -- (D); 
        \end{tikzpicture}
        \caption{}\label{fig:forbidden-subgraphs-c}
    \end{subfigure}
    \begin{subfigure}{0.15\textwidth}
        \centering
        \begin{tikzpicture}[scale=0.75]
            \node[circle, fill=black, inner sep=3pt] (A) at (0,0) {};
            \node[circle, fill=black, inner sep=3pt] (B) at (1,0) {};
            \node[circle, fill=black, inner sep=3pt] (C) at (0,1) {};
            \node[circle, fill=black, inner sep=3pt] (D) at (1,1) {};
            \draw[-latex,thick] (A) -- (B);
            \draw[-,thick, blue] (D) -- (B);
            \draw[-,thick, blue] (C) -- (A); 
            \draw[-latex,thick] (D) -- (C); 
        \end{tikzpicture}
        \caption{}\label{fig:forbidden-subgraphs-d}
    \end{subfigure}
        \begin{subfigure}{0.15\textwidth}
            \centering
            \begin{tikzpicture}[scale=0.75]
                \node[circle, fill=black, inner sep=3pt] (A) at (0,0) {};
                \node[circle, fill=black, inner sep=3pt] (B) at (1,0) {};
                \node[circle, fill=black, inner sep=3pt] (C) at (0,1) {};
                \node[circle, fill=black, inner sep=3pt] (D) at (1,1) {};
                \draw[-,thick, blue] (A) -- (B);
                \draw[-latex,thick] (D) -- (B);
                \draw[-latex,thick] (A) -- (C); 
                \draw[-,thick, blue] (D) -- (C); 
            \end{tikzpicture}
            \caption{}\label{fig:forbidden-subgraphs-e}
        \end{subfigure}
            \begin{subfigure}{0.15\textwidth}
                \centering
                \begin{tikzpicture}[scale=0.75]
                    \node[circle, fill=black, inner sep=3pt] (A) at (0,0) {};
                    \node[circle, fill=black, inner sep=3pt] (B) at (1,0) {};
                    \node[circle, fill=black, inner sep=3pt] (C) at (0,1) {};
                    \node[circle, fill=black, inner sep=3pt] (D) at (1,1) {};
                    \draw[-,thick, blue] (A) -- (B);
                    \draw[-latex,thick] (B) -- (D);
                    \draw[-latex,thick] (C) -- (A); 
                    \draw[-,thick, blue] (D) -- (C); 
                \end{tikzpicture}
            \caption{}\label{fig:forbidden-subgraphs-f}
         \end{subfigure}
        
         \vspace{0.7em}
  
         \begin{subfigure}{0.15\textwidth}
          \centering
          \begin{tikzpicture}[scale=0.75]
              \node[circle, fill=black, inner sep=3pt] (A) at (0,0) {};
              \node[circle, fill=black, inner sep=3pt] (B) at (1,0) {};
              \node[circle, fill=black, inner sep=3pt] (C) at (0,1) {};
              \node[circle, draw, inner sep=3pt] (D) at (1,1) {};
              \draw[-latex,thick] (A) -- (B);
              \draw[-latex,thick] (A) -- (C); 
          \end{tikzpicture}
      \caption{}\label{fig:forbidden-subgraphs-g}
      \end{subfigure}
      \begin{subfigure}{0.15\textwidth}
        \centering
        \begin{tikzpicture}[scale=0.75]
            \node[circle, fill=black, inner sep=3pt] (A) at (0,0) {};
            \node[circle, draw, inner sep=3pt] (B) at (1,0) {};
            \node[circle, fill=black, inner sep=3pt] (C) at (0,1) {};
            \node[circle, fill=black, inner sep=3pt] (D) at (1,1) {};
            \draw[-latex,thick] (C) -- (A); 
            \draw[-latex,thick] (C) -- (D); 
        \end{tikzpicture}
    \caption{}\label{fig:forbidden-subgraphs-h}
    \end{subfigure}
    \begin{subfigure}{0.15\textwidth}
      \centering
      \begin{tikzpicture}[scale=0.75]
          \node[circle, draw, inner sep=3pt] (A) at (0,0) {};
          \node[circle, fill=black, inner sep=3pt] (B) at (1,0) {};
          \node[circle, fill=black, inner sep=3pt] (C) at (0,1) {};
          \node[circle, fill=black, inner sep=3pt] (D) at (1,1) {};
          \draw[-latex,thick] (D) -- (B);
          \draw[-latex,thick] (D) -- (C); 
      \end{tikzpicture}
  \caption{}\label{fig:forbidden-subgraphs-i}
  \end{subfigure}
  \begin{subfigure}{0.15\textwidth}
    \centering
    \begin{tikzpicture}[scale=0.75]
        \node[circle, fill=black, inner sep=3pt] (A) at (0,0) {};
        \node[circle, fill=black, inner sep=3pt] (B) at (1,0) {};
        \node[circle, draw, inner sep=3pt] (C) at (0,1) {};
        \node[circle, fill=black, inner sep=3pt] (D) at (1,1) {};
        \draw[-latex,thick] (B) -- (A);
        \draw[-latex,thick] (B) -- (D);
    \end{tikzpicture}
  \caption{}\label{fig:forbidden-subgraphs-j}
  \end{subfigure}
    \caption{The set $\mathcal{F}$ of forbidden subgraphs. The blue edges represent that the graph belongs to $\mathcal{F}$ regardless of the orientation of its blue edges.
    For the forbidden patterns in the second row to match a subgraph, the white node must not be present in the graph.
    There is a total of
    $2 + 4 \cdot 4 - 4 = 14$ (using inclusion-exclusion) forbidden subgraphs in the first row, and thus a total of 18 forbidden patterns.}\label{fig:forbidden-subgraphs}
  \end{figure}

We will use the following lemma, which is a standard consequence of Jordan's curve theorem.
\begin{lemma}\label{lemma:even-odd}
Let $A \subseteq \mathbb{R}^2$ be a simple polygon, and $p$ a point in $\mathbb{R}^2$ that is not on the boundary of $A$. Let $v \neq 0$ be any vector in $\mathbb{R}^2$. Then, $p$ is in the interior of $A$ if and only if the ray $\{p + t \cdot v \mid t \geq 0\}$ intersects the boundary of $A$ an odd number of times.
\end{lemma}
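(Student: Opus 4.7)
The plan is to prove this as the classical even-odd ray-casting criterion, taking the Jordan curve theorem as the nontrivial input. By that theorem, $\mathbb{R}^2 \setminus \partial A$ has exactly two path-connected components: a bounded interior and an unbounded exterior. Define the function $\chi(t) \in \{\text{in}, \text{out}\}$ for every $t \geq 0$ such that $p + tv \notin \partial A$, according to which component $p + tv$ lies in. Since $A$ is bounded and $v \neq 0$, we have $\chi(t) = \text{out}$ for all sufficiently large $t$, while $\chi(0)$ encodes exactly what the lemma claims to characterize. The lemma thus reduces to showing that the number of boundary intersections of the ray has the same parity as the number of times $\chi$ switches as $t$ grows from $0$ to $\infty$.

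Since each component of $\mathbb{R}^2 \setminus \partial A$ is open and path-connected, $\chi$ is constant on each maximal interval of $[0,\infty)$ avoiding $\partial A$. So $\chi$ can only change at boundary intersections. The core local claim is then that at a transversal intersection with the relative interior of a polygon edge $e$, the function $\chi$ flips: a small disk around the crossing point is cut by $e$ into two half-disks, which by the Jordan curve theorem lie in opposite components of $\mathbb{R}^2 \setminus \partial A$. In the generic case where the ray meets only edge-interiors, this immediately yields the parity criterion.

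The main obstacle is handling the degenerate cases, where the ray passes through a vertex of $A$ or runs along an edge. The plan here is a perturbation argument: choose a direction $v_\varepsilon$ arbitrarily close to $v$ that avoids every vertex and is not parallel to any edge, so that the lemma holds for $v_\varepsilon$ by the transversal case above. Then I would do a careful local analysis at each degenerate contact to show that the parity is preserved under this perturbation: a vertex whose two incident edges lie on the same side of the ray contributes a tangential touch of parity $0$ (two nearby transversal crossings collide and cancel), while a vertex whose incident edges lie on opposite sides contributes a genuine crossing of parity $1$ (one nearby transversal crossing persists). Making this bookkeeping consistent with the way the lemma actually counts intersections at vertices and along shared segments is the delicate point; once it is settled, the result follows by passing from the generic statement for $v_\varepsilon$ back to $v$.
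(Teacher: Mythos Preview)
The paper does not prove this lemma at all; it simply introduces it with the sentence ``We will use the following lemma, which is a standard consequence of Jordan's curve theorem'' and then invokes it as a black box in the proof of Theorem~\ref{thm:1-sink}. Your outline is the standard textbook derivation of the even-odd rule from the Jordan curve theorem and is correct as a plan, so you are supplying strictly more than the paper does.

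One remark: the lemma as stated is somewhat informal about what ``number of times'' means when the ray passes through a vertex or is collinear with an edge, and you rightly flag this as the delicate point. In the paper's actual applications (e.g., the proof of Theorem~\ref{thm:1-sink}) the rays are chosen so that such degeneracies do not arise, so the authors never need the general statement. If you want your writeup to stand on its own, you should either restrict the lemma to rays in general position (transversal to every edge, avoiding every vertex), which is all the paper needs, or commit to a specific counting convention for degenerate contacts and verify that your perturbation argument respects it.
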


Let us now state the main theorem of this subsection.
\begin{theorem}\label{thm:1-sink}
    Let $P$ be an $\mathcal{F}$-avoidant oriented polyomino containing a directed cycle. Then $h(P) > 0$.
\end{theorem}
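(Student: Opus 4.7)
The plan is to prove the contrapositive by contradiction: I assume $P$ is $\mathcal{F}$-avoidant, contains a directed cycle $C$ (which I may take simple, since a non-simple directed cycle decomposes into simple ones), yet $h(P) = 0$. By Jordan's curve theorem applied to $C$ viewed as a simple polygon in $\mathbb{R}^2$ (the statement of the preceding lemma), $C$ bounds a finite rectilinear region $R$ that is a union of unit cells. The hypothesis $h(P) = 0$ forces every grid point in the closure of $R$ to lie in $V(P)$, so the induced subgraph of $\mathcal{G}$ on those points is entirely contained in $P$ with its orientations. Reflecting $P$ across a coordinate axis if needed (an operation under which $\mathcal{F}$ is closed as a set), I may further assume that $C$ is oriented counterclockwise.

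Let $c_0$ be the leftmost--bottommost cell of $R$, with corners $p = (i, j)$, $q = (i+1, j)$, $r = (i, j+1)$, $s = (i+1, j+1)$. By the choice of $c_0$, neither the cell directly below nor the cell directly to the left of $c_0$ belongs to $R$, so the south edge $pq$ and the west edge $pr$ both lie on $\partial R = C$. The CCW orientation then forces $p \to q$ and $r \to p$ on these two cycle edges. I then split on the orientation of the remaining two edges of $c_0$'s $2 \times 2$ square:
\begin{itemize}
  \item \emph{Case 1.} If the east edge $qs$ lies on $\partial R$, CCW forces $q \to s$, and the configuration $r \to p$ together with $q \to s$ realizes pattern (f) in $\mathcal{F}$.
  \item \emph{Case 2.} If the north edge $rs$ lies on $\partial R$, CCW forces $s \to r$, and the configuration $p \to q$ together with $s \to r$ realizes pattern (d).
  \item \emph{Case 3.} Otherwise, both $qs$ and $rs$ are chords interior to $R$. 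A check of the four possible chord orientations shows that the only assignment avoiding both (d) and (f) is $r \to s$ and $s \to q$.
\end{itemize}

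In Case 3, the forced chord $s \to q$ \emph{is} the west edge of the cell $c_1 = (i+1, j)$ immediately to the right of $c_0$, oriented south. The south edge of $c_1$ is on $\partial R$ (since $y = j$ is the bottommost row of $R$), hence oriented east by CCW. Consequently $c_1$ starts with exactly the same boundary data as $c_0$, and the same three-way split applies at $c_1$. Iterating rightward along the row $y = j$, each step either matches (d) or (f) or propagates the forced configuration to the next cell. Since $R$ is finite, the row $y = j$ contains a rightmost cell $c_m$ whose east edge must lie on $\partial R$; at $c_m$, Case 1 fires and pattern (f) matches, contradicting $\mathcal{F}$-avoidance.

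The main obstacle is the local check in Case 3: verifying that among the four chord-orientation choices for $qs$ and $rs$, the pair $r \to s, s \to q$ is the unique assignment avoiding both (d) and (f), and that patterns (c) and (e) cannot match in this $2 \times 2$ regardless (the former is precluded by $p \to q$, the latter by $r \to p$). This is a finite verification; patterns (a), (b), and (g)--(j) play no role in driving the contradiction. Once this local lemma is in place, the entire proof reduces to a single rightward sweep across one row of $R$, with no induction on the area of $R$ required.
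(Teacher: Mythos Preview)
Your argument is correct. Both proofs ultimately exploit the same mechanism---the $2\times 2$ patterns (c)--(f) forbid an orientation flip along a strip of consecutive unit squares---but the routes to that strip differ. The paper works directly with the cycle: it takes the leftmost column of $C$, locates a right-going edge at height $y_v$ and a left-going edge at height $y_{v'}$, and argues that if every lattice point in column $m+1$ between these heights were in $V(P)$, the horizontal edges of the resulting vertical strip would have to change orientation somewhere, violating (c) or (d); hence some interior point is missing and $h(P)>0$. You instead assume $h(P)=0$ up front, so the entire region $R$ is filled, and then sweep horizontally along the bottommost row: the CCW boundary forces the west edge of the leftmost cell to point south and the east edge of the rightmost cell to point north, and avoiding (d) and (f) propagates the ``south'' orientation of vertical edges rightward until it collides with the boundary.

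What your route buys is that the region-filling hypothesis eliminates the need for the paper's somewhat delicate extraction of $v$, $v'$, and the verification that the left column between them lies entirely on $C$; you get a clean row of fully present $2\times 2$ blocks for free. What the paper's route buys is a constructive identification of a specific hole vertex, rather than a bare contradiction. Your local check in Case~3 is correct: with $p\to q$ and $r\to p$ fixed, pattern (f) rules out $q\to s$ and pattern (d) rules out $s\to r$, leaving $r\to s$, $s\to q$ as the unique survivor, and (c), (e) are indeed precluded by the fixed edges as you note.
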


\begin{proof}
    Let $P$ be an $\mathcal{F}$-avoidant oriented polyomino containing a directed cycle $C = v_1 \to v_2 \to \cdots \to v_m \to v_1$. Let $X = \{ x \in \mathbb{N} \mid (x, y) \in V(C) \text{ for some } y \in \mathbb{N}\}$, and $Y = \{ y \in \mathbb{N} \mid (x, y) \in V(C) \text{ for some } x \in \mathbb{N}\}$; that is, the $x$- and $y$-coordinates of the vertices in $C$. Observe now that both $|X| > 1$ and $|Y| > 1$, since otherwise there would be a vertex $v \in V(C)$ with total degree $1$ in $C$, but every vertex in a directed cycle has total degree $2$.
    Therefore, if we write $m := \min X$ and $M := \max X$, we have $m < M$, and there must be two distinct vertices $\ell, r$ in $V(C)$ such that $\ell = (m, y_\ell)$ and $r = (M, y_r)$ for some $y_\ell, y_r \in \mathbb{N}$.
    Now let $\pi$ be the path in $C$ from $\ell$ to $r$, and observe that since $\pi$ goes to the right overall, at some point in $\pi$ there must be an edge going to the right for the first time, arriving to a vertex $v := (m + 1, y_v)$.
    Conversely, let $\pi'$ be the path in $C$ from $r$ to $\ell$, and observe that since $\pi'$ starts at $x$-coordinate $M$, and reaches $\ell$ at $x$-coordinate $m$, then at some point in $\pi'$ there is at lease one edge going left and arriving to a vertex $v' := (m, y_{v'})$, with $y_v \neq y_{v'}$. Among these, take the $v'$ that minimizes $|y_v - y_{v'}|$.
   
    We now claim that all edges in the path from $\ell = (m,  y_\ell)$ to $(m, y_v)$ go in the same vertical direction, staying at $x$-coordinate $m$. Indeed, if there was an edge going left, that would contradict the minimality of $m$, and if there was an edge going right, that would contradict the fact that $v$ is the first vertex in $\pi$ with an incoming right edge.
    Similarly, all edges in the path from $(m, y_{v'})$ to $\ell$ go in the same vertical direction, the opposite of the previous one, for otherwise we would either break the minimality of $m$, or the minimality of $|y_v - y_{v'}|$.
    As a result, all points $(m, y)$ with $y \in \{\min(y_v, y_{v'}), \min(y_v, y_{v'}) + 1, \ldots, \max(y_v, y_{v'}) \}$ belong to $V(C)$, and thus to $V(P)$.

    Consider now the set of ``parallel'' undirected edges  
    \[ 
    S := \{ \{(m, y), (m+1, y)\} \mid y \in \{\min(y_v, y_{v'}), \min(y_v, y_{v'}) + 1, \ldots, \max(y_v, y_{v'}) \} \},
    \]
    which is depicted with dashed red lines in \Cref{fig:1-sink}. 

        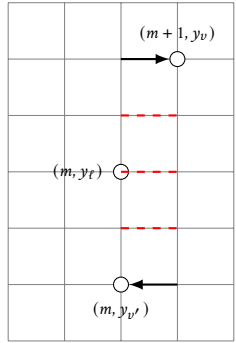
\begin{figure}
            \centering
            \begin{tikzpicture}[    
                scale = 0.75,   
                vertex/.style={circle,draw,fill=white,inner sep=2pt},
                every node/.style={font=\tiny}]
                \draw[step=1cm,gray,very thin] (0,0) grid (4, 6);

                \node[vertex, label={left:$(m, y_\ell)$}] (A) at (2, 3) {};

                \node[vertex, label={$(m+1, y_v)$}] (B) at (3, 5) {};

                \node[vertex, label={below:$(m, y_{v'})$}] (C) at (2, 1) {};

                \draw[-latex,thick] (3, 1) -- (C);

                \draw[-latex,thick] (2, 5) -- (B);

                \draw[-,thick, dashed, red] (2, 4) -- (3, 4);

                \draw[-,thick, dashed, red] (2, 3) -- (3, 3);

                \draw[-,thick, dashed, red] (2, 2) -- (3, 2);

            \end{tikzpicture}
            \caption{Illustration of the proof of \Cref{thm:1-sink}.}\label{fig:1-sink}
        \end{figure}
 
        Let $S_L := \{ (m, y)\mid  y \in \{\min(y_v, y_{v'}), \min(y_v, y_{v'}) + 1, \ldots, \max(y_v, y_{v'}) \}\}$ and $S_R := \{ (m+1, y)\mid  y \in \{\min(y_v, y_{v'}), \min(y_v, y_{v'}) + 1, \ldots, \max(y_v, y_{v'}) \}\}$ be the left and right endpoints of the edges in $S$, respectively.
        Our previous observation was that $S_L \subseteq V(P)$, and also we know that the topmost and bottommost points in $S_R$ are in $V(P)$, since they correspond to $(m+1, y_v)$ and $(m+1, y_{v'})$.
        We now claim that at least one of the points in $S_R$ is not in $V(P)$. For a contradiction, suppose that all points in $S_R$ are in $V(P)$. Then, all edges in $S$ are in $E(P)$ with some orientation. However, the $\mathcal{F}$-avoidant condition implies, by considering the forbidden subgraphs in \Cref{fig:forbidden-subgraphs-c,fig:forbidden-subgraphs-d}, that there cannot be a change of orientation (from left to right, or right to left) between consecutive parallel horizontal edges in $S$. 
        This contradicts the fact that the topmost and bottommost edges in $S$ have opposite orientations in $P$, requiring at least one change along the way. 

        As a result, we have that there is at least one point $q \in S_R$ that is not in $V(P)$. Now, consider the ray $\vec{\rho} := \{(m+1, y_q) + (-1, 0) \cdot t \mid t \geq 0 \}$ going left from $q$, and the simple polygon $A$ induced by the rectilinear embedding of the directed cycle $C$ and its interior. We can now observe that $(m, y_q)$ is the only boundary point of $A$ intersected by $\vec{\rho}$, and thus by~\Cref{lemma:even-odd} we have that $q$ is in the interior of $A$.
        
        Note that there are infinitely many points in $V(\mathcal{G}) \setminus V(P)$, and infinitely many of them must be in the exterior of $A$. Take $w$ to be one of them, and assume for a contradiction that there is a path $\gamma$ between $q$ and $w$ in the subgraph of $\mathcal{G}$ induced by $V(\mathcal{G}) \setminus V(P)$. Then, let $\gamma'$ be the natural rectilinear embedding of $\gamma$ in the plane, and observe that as $\gamma'$ goes from a point in the interior of $A$ to a point in the exterior of $A$, by Jordan's curve theorem, it must intersect the boundary of $A$. As the boundary of $A$ is by definition a union of unit-length axis-aligned segments between lattice points, and so is $\gamma'$, their non-empty intersection must include a lattice point $\omega$. Since each unit-length segment $s$ composing $A$ contains only two lattice points, namely its two endpoints $u_1, u_2$, and by definition of $A$ that implies $u_1, u_2 \in V(C) \subseteq V(P)$, we have that $\omega \in V(P)$. Similarly, since $\gamma'$ is a union of unit-length axis-aligned segments between lattice points, and $\omega$ is a lattice point in $\gamma'$, we have that $\omega \in V(\gamma)$. As a result, we have that $V(\gamma) \cap V(P) \neq \emptyset$, which contradicts the fact that $\gamma$ is a path in the subgraph of $\mathcal{G}$ induced by $V(\mathcal{G}) \setminus V(P)$. We conclude that there is no path between $q$ and $w$ in the subgraph of $\mathcal{G}$ induced by $V(\mathcal{G}) \setminus V(P)$, and thus $h(P) > 0$.


    
\end{proof}

\subsection{Correctness of local constraints}\label{subsec:local-constraints}

\begin{lemma}
Let $P$ be a connected polyomino and $s \in P$ an arbitrary vertex. If there exists a vertex $v = (x_v, y_v) \in V(P)$ such that $v_l = (x_v - 1, y_v), v_r = (x_v + 1, y_v) \in V(P)$ and $d(v, s) > \max(d(v_l, s), d(v_r, s))$, then $h(P) > 0$.
\label{lemma:collinear-forbiden-pattern}
\end{lemma}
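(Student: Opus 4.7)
The plan is to construct a simple cycle $C$ in $P$ that uses the two edges $\{v_l, v\}$ and $\{v, v_r\}$, and then to exhibit a lattice point strictly inside $C$ that is not in $V(P)$. By the Jordan-curve argument that appears in the last paragraph of the proof of Theorem~\ref{thm:1-sink}, any such point sits in a bounded connected component of $V(\mathcal{G}) \setminus V(P)$, forcing $h(P) > 0$.

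From the triangle inequality and the hypothesis we get $d(v_l, s) = d(v_r, s) = d(v, s) - 1$. Pick shortest paths $P_l, P_r$ from $s$ to $v_l, v_r$ respectively. Neither passes through $v$, since an $s$-to-$v_l$ walk through $v$ has length at least $d(v, s) + 1 > d(v_l, s)$. Concatenating $P_l$ with the reverse of $P_r$ then gives a walk from $v_l$ to $v_r$ in $P \setminus \{v\}$, from which we extract a simple path $Q$. Setting $C := Q \cup \{\{v_l, v\}, \{v, v_r\}\}$ yields a simple cycle in $P$ using both prescribed edges. Its rectilinear embedding bounds a region $A$; applying Lemma~\ref{lemma:even-odd} to a short vertical test segment that crosses $\{v, v_r\}$ exactly once shows that one of $(x_v, y_v+1), (x_v, y_v-1)$ lies strictly inside $A$. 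WLOG this is $u := (x_v, y_v+1)$.

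If $u \notin V(P)$ we are done. Otherwise, we refine the choice of $Q$ so that the number of $V(P)$-lattice points strictly inside $A$ is minimized. We claim this minimum equals $0$. Suppose $w \in V(P) \cap A^\circ$. Because two unit axis-aligned segments at integer coordinates cannot cross transversely, any edge of $P$ incident to an interior vertex of $A$ goes either to another interior vertex or to a vertex of $V(C)$. Combined with the connectedness of $V(P) \setminus \{v\}$ (witnessed by $P_l, P_r$), this produces a path inside $V(P) \cap \overline{A}$ from $w$ to some $x^* \in V(C) \setminus \{v\}$. Splicing this path into $Q$ at $x^*$ yields a new simple $v_l$-to-$v_r$ path in $P \setminus \{v\}$ whose cycle strictly reduces the interior $V(P)$-count, contradicting minimality. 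Hence $u \in V(\mathcal{G}) \setminus V(P)$, and the Jordan argument from Theorem~\ref{thm:1-sink} concludes $h(P) > 0$.

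The main obstacle is the splicing step: one must pick $x^* \neq v$ so that the spliced path stays in $P \setminus \{v\}$, and verify that the resulting cycle genuinely improves the minimality measure (moving $w$ from the interior of $A$ onto the boundary of the new cycle). Both points rely on how $\overline{A}$ interacts with $V(P)$ as a connected polyomino in a simply connected rectilinear region, and formalizing them cleanly—especially the requirement to isolate $v$ from the splicing endpoint—is where the bulk of the delicate work lies.
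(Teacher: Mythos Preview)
Your setup (extracting a simple cycle $C$ through $v$ using the two horizontal edges) is fine, but the minimality/splicing argument in the second half has a genuine gap that you yourself flag. Two specific problems. First, the sentence ``connectedness of $V(P)\setminus\{v\}$ (witnessed by $P_l,P_r$)'' is not justified: $P_l,P_r$ only show that $s,v_l,v_r$ lie in one component of $P\setminus\{v\}$, not that an arbitrary interior vertex $w$ does. Indeed $v$ can be a cut vertex of $P$, with a tendril hanging from $v$ vertically into the interior of $A$ and touching nothing else; every vertex of that tendril is an interior $V(P)$-point with no path to $C\setminus\{v\}$ in $P\setminus\{v\}$. Second, even granting a path $\pi$ from $w$ to a single boundary point $x^*$, ``splicing $\pi$ into $Q$ at $x^*$'' does not produce a new simple $v_l$--$v_r$ path: you would need a \emph{chord} hitting $C$ at two distinct points to shortcut $Q$ and strictly shrink the interior. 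So as written the minimum-interior argument does not close.

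The paper's proof avoids this entirely by a direct contradiction. It builds the cycle from the shortest paths $Q_l,Q_r$ themselves (starting from the last common vertex $w$ of $P_l,P_r$), assumes $h(P)=0$, and then shoots a vertical ray from $v$ into the interior of $A$. Under $h(P)=0$ every lattice point along that ray up to the first boundary hit $(x_v,y_v-k)\in Q_l\cup Q_r$ lies in $V(P)$, giving a length-$k$ path in $P$ from $v$ to that boundary point. Since $Q_l,Q_r$ are geodesics, comparing this vertical shortcut with the distance from $v_r$ (or $v_l$) to the same point yields $d(v,w)<d(v_r,w)$, contradicting $d(v,w)>\max(d(v_l,w),d(v_r,w))$. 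The key differences are: (i) proceed by contradiction on $h(P)=0$ rather than hunting for a minimal cycle, and (ii) exploit that the cycle is built from \emph{shortest} paths, so the vertical column immediately gives a distance inequality. This sidesteps both the connectedness issue and the need for any splicing.
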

\begin{proof}
Let $P_l, P_r$ denote the paths from $s$ to $v_l$ and $v_r$ respectively. Note that if $w \in V(P_l) \cap V(P_r)$ for any $w \in V(P)$, then the position of $w$ in $P_l, P_r$ must necessarily be the same, or else it would contradict $d(v, s) > \max(d(v_l, s), d(v_r, s))$. As such, pick $w \in V(P_l) \cap V(P_r)$ such that $w$ has the largest index in the paths, and let $Q_l, Q_r$ denote the paths from $w$ to $v_l$ and $v_r$ respectively. Because $w$ lies in the intersection of the two paths, we necessarily have $d(v, w) > \max(d(u_l, w), d(u_r, w))$. Consider the directed cycle $C = v \to v_r \to Q_r \to Q_l \to v_l \to v$ which induces a closed loop without self-crossings via its rectilinear embedding, further denote the induced simple polygon from this embedding as $A$. By the Jordan Curve Theorem and the fact that the vectors $(0, \pm 1)$ are \emph{not tangent} to the boundary of $A$ at $v = (x_v, y_v)$, one of the rays $\vec{\rho_{\pm}} = \{(x_v, y_v) + (0, \pm 1) \cdot t \mid t \geq 0 \}$ must point towards the interior of $A$, suppose without loss of generality that $\vec{\rho_{-}}$ does. As the boundary of $A$ is formed by unit-length axis-aligned segments, there must be some minimal positive integer $k > 0$ such that $(x_v, y_v - k)$ first intersects the boundary of $A$. Consider the sequence of vertices $v = (x_v, y_v), (x_v, y_v - 1), \cdots, (x_v, y_{v - k})$, it follows by construction that $(x_v, y_v), (x_v, y_v - k) \in V(P)$. Suppose for the sake of contradiction that $h(P) = 0$, which necessarily implies $\{(x_v, y_v - i)\}_{1 \leq i \leq k - 1} \subseteq V(P)$ as they all belong to the interior of $A$ and would result in a hole otherwise. Thus, the path $Q = (x_v, y_v) \to (x_v, y_v - 1) \to \cdots \to (x_v, y_v - k)$ is a path in $P$ with $(x_v, y_v - k) \in Q_l \cup Q_r$, assume without loss of generality that $(x_v, y_v - k) \in Q_r$. As $v_r = (x_v + 1, y_v)$ and $P$ is a subgraph of $\mathcal{G}$, this further implies $d(v_r, (x_v, y_v - k)) \geq d(v, (x_v, y_v - k)) + 1$ and therefore $d(v_r, w) \geq d(v, w) + 1$, a contradiction. 
\end{proof}

In what follows, we will consider undirected and connected graphs with no self-loops, and use notation $d(u, v)$ for the distance between two vertices $u, v$ in the graph, defined as the length of the shortest path between $u$ and $v$.
\begin{lemma}
    Let $G = (V, E)$ be an undirected and connected graph, and $\{u, v\} \in E$ for some vertices $u,v$. Then, for any vertex $w \in V$, we have that $|d(u, w) - d(v, w)| \leq 1$.
    \label{lemma:dist-1}
\end{lemma}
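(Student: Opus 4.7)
The plan is to prove this by the standard triangle inequality for shortest path distances in graphs. The key observation is that since $\{u,v\} \in E$, we have $d(u,v) = 1$, which immediately constrains how much the distance from any third vertex $w$ can differ between $u$ and $v$.

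The main argument will proceed as follows. First, I will recall that graph distance $d(\cdot, \cdot)$, defined as the length of a shortest path, satisfies the triangle inequality: for any three vertices $x, y, z \in V$, we have $d(x, z) \leq d(x, y) + d(y, z)$. This is justified by concatenating a shortest $x$-to-$y$ path with a shortest $y$-to-$z$ path, yielding a walk of length $d(x,y) + d(y,z)$ from $x$ to $z$, which must be at least as long as the shortest $x$-to-$z$ path. Since $G$ is connected, all these distances are finite, so the inequality is meaningful.

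Next, I will apply this with the specific choices of vertices. Taking $x = u$, $y = v$, $z = w$ gives $d(u, w) \leq d(u, v) + d(v, w)$. Symmetrically, taking $x = v$, $y = u$, $z = w$ gives $d(v, w) \leq d(v, u) + d(u, w)$. Because $\{u, v\}$ is an edge of $G$, the single edge forms a path of length $1$ from $u$ to $v$, so $d(u, v) = d(v, u) \leq 1$; it is exactly $1$ assuming $u \neq v$ (which holds since $G$ has no self-loops). Substituting, the two inequalities become $d(u, w) - d(v, w) \leq 1$ and $d(v, w) - d(u, w) \leq 1$, which together yield $|d(u, w) - d(v, w)| \leq 1$.

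There is no real obstacle here; the result is essentially a one-line consequence of the triangle inequality for graph distances. The only subtlety worth mentioning is that both distances are well-defined (and finite) precisely because $G$ is assumed connected, which guarantees the existence of the paths whose lengths we are comparing.
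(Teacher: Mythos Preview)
Your proposal is correct and follows essentially the same approach as the paper: both apply the triangle inequality $d(u,w) \leq d(u,v) + d(v,w) = 1 + d(v,w)$ and its symmetric counterpart to conclude $|d(u,w) - d(v,w)| \leq 1$. Your write-up adds some extra justification (why the triangle inequality holds, why distances are finite), but the argument is the same.
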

\begin{proof}
    The triangle inequality says $d(u, w) \leq d(u, v) + d(v, w) = 1 + d(v, w)$, and symmetrically $d(v, w) \leq 1 + d(u, w)$. 
\end{proof}

\begin{lemma}
    Let $G = (V, E)$ be an undirected, connected, and bipartite graph. Then, for any edge $\{u, v\} \in E$, and vertex $w \in V$, we have that $d(u, w) \neq d(v, w)$.
    \label{lemma:neq-d-bipartite}
\end{lemma}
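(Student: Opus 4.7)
The plan is to exploit bipartiteness via a standard parity argument. First I would fix a bipartition $V = A \sqcup B$ witnessing that $G$ is bipartite, so that every edge of $G$ has one endpoint in $A$ and one in $B$. I would then observe the following basic fact: for any two vertices $x, y \in V$, the length of every walk from $x$ to $y$ has the same parity, namely even if $x$ and $y$ lie in the same part and odd if they lie in different parts. This is because each step of a walk switches sides of the bipartition, so the parity of the length of the walk is determined by whether the endpoints agree in their bipartition class. In particular, the shortest-path distance $d(x, y)$ has this prescribed parity.

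Next I would apply this fact to $d(u, w)$ and $d(v, w)$. Since $\{u, v\} \in E$ and $G$ is bipartite, $u$ and $v$ lie in opposite parts. Therefore, for any $w \in V$, exactly one of $u, w$ and $v, w$ is a same-part pair and the other is an opposite-part pair. By the parity observation, one of $d(u, w)$ and $d(v, w)$ is even and the other is odd, and hence they cannot be equal.

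As a small bonus, one can combine this with the preceding \Cref{lemma:dist-1} to conclude that in fact $|d(u, w) - d(v, w)| = 1$ exactly, not merely that the two distances differ. The only step that requires any care is verifying the parity fact (that walk-length parity is constant between a fixed pair of vertices in a bipartite graph), but this is immediate from the definition of a bipartition: each edge crosses between $A$ and $B$, so after $k$ steps one has switched sides $k$ times, and the resulting side depends only on the parity of $k$ and the starting side. I do not anticipate any real obstacle here; the proof is a two-line application of the parity characterization of bipartite graphs.
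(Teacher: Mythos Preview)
Your proof is correct and is essentially the same parity argument as the paper's: the paper phrases it by contradiction (equal distances would yield an odd closed walk $v \to u \to w \to v$ of length $1 + 2c$, impossible in a bipartite graph), while you phrase it directly via the bipartition (since $u$ and $v$ lie in opposite parts, $d(u,w)$ and $d(v,w)$ have opposite parities). These are two standard equivalent formulations of the same fact, and both are one-line arguments.
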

\begin{proof}
    Suppose for a contradiction that $d(u, w) = d(v, w) = c$. 
    Then, there is a path $P$ from $u$ to $w$ of length $c$, and a path $P'$ from $w$ to $v$ of length $c$. Thus, $v \to P \to P'$ is a cycle of length $1 + 2c$, which contradicts the fact that $G$ is bipartite.
\end{proof}

Combining the previous two lemmas, we get the following.
\begin{lemma}
    Let $G = (V, E)$ be an undirected, connected, and bipartite graph. Let $w \in V$ be any vertex, and $\{u, v\} \in E$ any edge. Then, we have that either $d(u, w) = d(v, w) + 1$ or $d(v, w) = d(u, w) + 1$.
\end{lemma}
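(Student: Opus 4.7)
The plan is to combine the two preceding lemmas directly. The first lemma gives $|d(u, w) - d(v, w)| \leq 1$, which means $d(u, w) - d(v, w) \in \{-1, 0, 1\}$. The second lemma rules out the case $d(u, w) = d(v, w)$, i.e. it forbids the value $0$. Hence the difference must be exactly $\pm 1$, which is precisely the claim.

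More concretely, I would write: assume for contradiction that neither $d(u, w) = d(v, w) + 1$ nor $d(v, w) = d(u, w) + 1$ holds. Then $d(u, w) - d(v, w) \notin \{-1, 1\}$. By the first lemma, the only remaining option is $d(u, w) - d(v, w) = 0$, i.e. $d(u, w) = d(v, w)$, contradicting the second lemma. Alternatively (and more cleanly), one can argue directly: let $a = d(u, w)$ and $b = d(v, w)$. By Lemma on distance-1 vertices, $|a - b| \leq 1$, so $a - b \in \{-1, 0, 1\}$. By the bipartite lemma, $a \neq b$, so $a - b \in \{-1, 1\}$. In the first case $b = a + 1$, i.e. $d(v, w) = d(u, w) + 1$; in the second case $d(u, w) = d(v, w) + 1$.

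There is no real obstacle here; the statement is an immediate logical consequence of the two preceding lemmas, and the proof is essentially a one-line combination. The only thing worth doing carefully is making sure the distances are well-defined, which requires that $G$ be connected — and this is included in the hypothesis.
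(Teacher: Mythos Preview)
Your proposal is correct and matches the paper's approach exactly: the paper presents this lemma without proof, simply stating ``Combining the previous two lemmas, we get the following,'' which is precisely what you do.
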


Recall now that an orientation of an undirected graph $G$ corresponds to a directed graph $G'$ obtained by replacing each edge $\{u, v\} \in E(G)$ with exactly one of the two directed edges $(u, v)$ or $(v, u)$.

\begin{definition}
    Let $G = (V, E)$ be an undirected, connected, and bipartite graph and $s \in V$ some vertex. We define the \emph{``orientation of $G$ towards $s$''}, denoted as $G_{\downarrow s}$ as the directed graph obtained by orienting each edge $\{u, v\} \in E$ as going from $u$ to $v$ if $d(s, u) = d(s, v) + 1$, and from $v$ to $u$ if $d(s, v) = d(s, u) + 1$. 
    \label{def:oriented-towards}
\end{definition}

Note that this is well-defined by~\Cref{lemma:neq-d-bipartite}. 

\begin{lemma}
    Let $G = (V, E)$ be an undirected graph and $s \in V$ some vertex. Then, $G_{\downarrow s}$ has no directed cycles.
\label{lemma:orientation-towards-no-cycles}
\end{lemma}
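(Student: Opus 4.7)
The plan is to exploit the defining feature of $G_{\downarrow s}$: every directed edge points in the direction of strictly decreasing distance to $s$. More precisely, by \Cref{def:oriented-towards}, whenever $(u,v)$ is a directed edge of $G_{\downarrow s}$, we have $d(s,u) = d(s,v) + 1$, so traversing a directed edge decreases the distance to $s$ by exactly one.

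With this in hand, the argument is essentially a one-liner. Suppose for contradiction that $G_{\downarrow s}$ contained a directed cycle $v_1 \to v_2 \to \cdots \to v_k \to v_1$. Applying the observation above to each directed edge of the cycle yields
\[
d(s, v_1) = d(s, v_2) + 1 = d(s, v_3) + 2 = \cdots = d(s, v_k) + (k-1) = d(s, v_1) + k,
\]
which forces $k = 0$, contradicting the existence of the cycle.

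The only subtlety is verifying that $G_{\downarrow s}$ is well-defined in the first place, which is already handled by the preceding lemmas: connectivity of $G$ ensures $d(s, \cdot)$ is finite on all vertices, and bipartiteness together with \Cref{lemma:neq-d-bipartite} (and the $\pm 1$ bound from \Cref{lemma:dist-1}) guarantees that for each edge $\{u,v\}$ exactly one of the two cases in the definition applies. Thus no obstacle arises: the proof is purely a matter of chaining the defining equality around a hypothetical cycle and observing the resulting contradiction on a strictly monotone sequence of nonnegative integers.
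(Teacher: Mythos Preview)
Your proof is correct and essentially identical to the paper's own argument: both assume a directed cycle, chain the defining equality $d(s,v_i)=d(s,v_{i+1})+1$ around the cycle to obtain $d(s,v_1)=d(s,v_1)+k$, and derive a contradiction. Your additional remarks on well-definedness are accurate but go slightly beyond what the paper states in its proof.
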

\begin{proof}
    Suppose for a contradiction that $G_{\downarrow s}$ has a directed cycle $C = v_1 \to v_2 \to \cdots \to v_m \to v_1$, with $m > 1$. Then, by the observation above, we have that $d(s, v_1) = d(s, v_2) + 1 = d(s, v_3) + 2 = \cdots = d(s, v_m) + (m - 1) = d(s, v_1) + m$, a contradiction since $m > 1$.
\end{proof}

\begin{lemma}
    Consider oriented polyomino $P$ that is connected and oriented towards a vertex $s \in V(P)$. Then, $P$ does not contain any of the patterns depicted in \Cref{fig:forbidden-subgraphs-c,fig:forbidden-subgraphs-d,fig:forbidden-subgraphs-e,fig:forbidden-subgraphs-f}. 
    \label{lemma:no-opposite-parallel-edges}
\end{lemma}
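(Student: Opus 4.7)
My plan is to proceed by contradiction. Suppose $P_{\downarrow s}$ contains pattern~c; the other three patterns follow by the obvious 90° rotational and reflective symmetries of the grid (which preserve the BFS-orientation structure). Write $A = (x,y)$, $B = (x{+}1, y)$, $C = (x, y{+}1)$, $D = (x{+}1, y{+}1)$, and let $\ell(v) := d_P(s, v)$. The first step translates pattern~c into BFS-level equations: the directed edges $B \to A$ and $C \to D$ give, by Definition~\ref{def:oriented-towards}, $\ell(B) = \ell(A) + 1$ and $\ell(C) = \ell(D) + 1$. Combining these with bipartiteness along the remaining two edges $\{A, C\}$ and $\{B, D\}$ (both forced to have $|\cdot| = 1$), a short case split rules out $\ell(C) = \ell(A) - 1$, since that case would yield $|\ell(B) - \ell(D)| = 3$, violating bipartiteness. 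Hence $\ell(C) = \ell(A) + 1$, and substituting gives $\ell(A) = \ell(D) = k$ and $\ell(B) = \ell(C) = k + 1$ for some $k \geq 1$.

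The second step identifies BFS-predecessors. Since $\ell(A) = k \geq 1$, $A$ has a neighbor in $V(P)$ at level $k - 1$; this neighbor cannot be $B$ or $C$ (both at level $k{+}1$), so either $A_W := (x{-}1, y)$ or $A_S := (x, y{-}1)$ is in $V(P)$ and has level $k - 1$. A symmetric argument shows that one of $D_E := (x{+}2, y{+}1)$ or $D_N := (x{+}1, y{+}2)$ is in $V(P)$ with level $k - 1$. So the shortest-path ``flow'' into the 2x2 enters through the west/south side at $A$ and exits through the east/north side at $D$.

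The final step is the topological contradiction. I would concatenate a shortest path $\pi_A$ from $s$ to $A$ (arriving via $A_W$ or $A_S$), the short arc $A \to B \to D$, and the reverse of a shortest path $\pi_D$ from $s$ to $D$ (arriving via $D_E$ or $D_N$), obtaining a closed walk $\Gamma$ in $P$. Its rectilinear embedding in $\mathbb{R}^2$ is a closed curve that, by Jordan's theorem, encloses a bounded region containing the center $(x{+}0.5, y{+}0.5)$ of the 2x2. A case split over the four predecessor configurations (NW/NE/SW/SE) then pinpoints an interior lattice point --- sometimes $C$ itself, sometimes a lattice point between $A_W/A_S$ and $D_E/D_N$ --- to which Lemma~\ref{lemma:collinear-forbiden-pattern} can be applied via the collinear triple through $B$ (or $C$). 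This yields $h(P) > 0$, contradicting the (implicit) assumption that $P$ has no holes. The main obstacle I anticipate is making the Jordan-curve case analysis clean: I will need to ensure $\Gamma$ is a simple closed curve (after shortcutting any self-intersections along the BFS tree) and to verify that each of the four predecessor combinations admits the collinear triple needed to invoke Lemma~\ref{lemma:collinear-forbiden-pattern}.
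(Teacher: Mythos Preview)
The paper's proof is entirely arithmetic and much shorter than your plan. After noting that $P_{\downarrow s}$ has no directed cycles (Lemma~\ref{lemma:orientation-towards-no-cycles}), it asserts that ``without loss of generality'' one may check the single orientation shown in Figure~\ref{fig:forbidden-pattern} (blue edges $C\to A$ and $D\to B$), for which the four level equations immediately give $d(s,w)=d(s,u)+1$ and $d(s,w)=d(s,u)+3$, a contradiction. There is no Jordan-curve step, no predecessor analysis, and no appeal to Lemma~\ref{lemma:collinear-forbiden-pattern}.

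Your first step reproduces this arithmetic, but you then isolate the \emph{other} non-cyclic orientation of the blue edges ($C\to A$ and $B\to D$), obtaining the ``diamond'' $\ell(A)=\ell(D)=k$, $\ell(B)=\ell(C)=k+1$. You are right that this case is arithmetically consistent; the paper's ``WLOG'' does not cover it, and the lemma \emph{as stated} (assuming only connectivity) is actually false for it. A counterexample: take the $2{\times}2$ block together with two equal-length paths, one leaving $A$ through $(0,-1)$ and one leaving $D$ through $(1,2)$, that avoid all other neighbours of $B$ and $C$ and meet at a common vertex $s$; this polyomino is connected, $P_{\downarrow s}$ exhibits the diamond on the block, and $P$ has a hole. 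Consequently your topological argument---whose punchline is $h(P)>0$---cannot prove the lemma as written; it proves it under the extra hypothesis $h(P)=0$, which is what Theorem~\ref{thm:forbidden-patterns} actually needs. Under that hypothesis your outline is reasonable, but the closing appeal to Lemma~\ref{lemma:collinear-forbiden-pattern} is not yet justified: that lemma demands a collinear triple $v_l,v,v_r$ with the middle vertex strictly farther from $s$, and none of $A,B,C,D$ directly supplies one (for $B$ you would need $(2,0)\in V(P)$ with level at most $k$, which your cycle $\Gamma$ does not by itself force).
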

\begin{proof}
    \begin{figure}
        \centering
    \begin{tikzpicture}
        \node[circle, fill=black, inner sep=3pt, label=left:$u$] (A) at (0,0) {};
        \node[circle, fill=black, inner sep=3pt, label=right:$v$] (B) at (1,0) {};
        \node[circle, fill=black, inner sep=3pt, label=left:$w$] (C) at (0,1) {};
        \node[circle, fill=black, inner sep=3pt, label=right:$x$] (D) at (1,1) {};
        \draw[-latex,thick] (B) -- (A);
        \draw[-latex,thick] (D) -- (B);
        \draw[-latex,thick] (C) -- (A); 
        \draw[-latex,thick] (C) -- (D); 
    \end{tikzpicture}
    
    \caption{Forbidden pattern for the proof of~\Cref{lemma:no-opposite-parallel-edges}.}\label{fig:forbidden-pattern}
\end{figure}
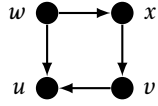
    First, note that any polyomino, when taken as an undirected graph, is bipartite since it is a subgraph of $\mathcal{G}$, which is bipartite.
    Thus, we can apply~\Cref{lemma:orientation-towards-no-cycles} to obtain that $P$ has no directed cycles. Therefore, out of the patterns in~\Cref{fig:forbidden-subgraphs-c,fig:forbidden-subgraphs-d,fig:forbidden-subgraphs-e,fig:forbidden-subgraphs-f}, it suffices to check without loss of generality that it cannot contain the pattern depicted in~\Cref{fig:forbidden-pattern}.
    Indeed, suppose expecting a contradiction that $P$ contains the pattern in~\Cref{fig:forbidden-pattern}. 
    Then, by~\Cref{def:oriented-towards}, we have the following equations: 
    \begin{align*}
        d(s, w) &= d(s, u) + 1, \tag{edge $(w, u)$}\\
        d(s, w) &= d(s, x) + 1, \tag{edge $(w, x)$}\\
        d(s, x) &= d(s, v) + 1, \tag{edge $(x, v)$}\\
        d(s, v) &= d(s, u) + 1. \tag{edge $(v, u)$}\\
    \end{align*}
    The first and last equations together imply $d(s, w) = d(s, v)$, and the second and third equations together imply $d(s, w) = d(s, v) + 2$, a contradiction.       
\end{proof}

The following lemma shows that the second row of patterns in \Cref{fig:forbidden-subgraphs-body} are also avoided by a polyomino without holes. 

\begin{lemma}
    \label{lemma:orthogonal-forbidden-pattern}
    Let $U$ be a connected polyomino without holes, then $P := U_{\downarrow s}$ for any $s \in U$ avoids the second row of \Cref{fig:forbidden-subgraphs}.
\end{lemma}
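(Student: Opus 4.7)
The plan is to assume for contradiction that $P = U_{\downarrow s}$ contains one of the patterns (g)--(j) from the second row of \Cref{fig:forbidden-subgraphs}, and derive $h(U) > 0$. By the $90^\circ$ rotational symmetry of the four patterns, it suffices to handle pattern (g). So I would set $A = (x, y)$, $B = (x+1, y)$, $C = (x, y+1)$, and $D = (x+1, y+1)$, with $A, B, C \in V(U)$, $D \notin V(U)$, and directed edges $A \to B$ and $A \to C$ in $P$. Since $\mathcal{G}$ is bipartite, \Cref{def:oriented-towards} and \Cref{lemma:neq-d-bipartite} force $d(s, B) = d(s, C) = d(s, A) - 1$; hence any shortest path from $B$ (respectively $C$) to $s$ in $U$ must avoid $A$, because a path through $A$ would cost at least $d(s, A) + 1$ steps.

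The next step is to build a simple cycle $\mathcal{C}$ in $U$ that encloses $D$. I would take shortest paths $\pi_B, \pi_C$ from $B, C$ to $s$ avoiding $A$, let $w$ be the first vertex on $\pi_B$ (starting from $B$) that also appears on $\pi_C$, and set $\pi_B', \pi_C'$ to be the corresponding initial segments up to $w$. Then $\mathcal{C}: A \to B \to \pi_B' \to w \to (\pi_C')^{-1} \to C \to A$ is a simple cycle in $U$, since $A \notin V(\pi_B') \cup V(\pi_C')$ and $V(\pi_B') \cap V(\pi_C') = \{w\}$. Because $U$ is a connected polyomino without holes---hence simply connected as a planar region---every lattice point in the Jordan interior of the rectilinear embedding of $\mathcal{C}$ must lie in $V(U)$; it therefore suffices to show $D$ lies in that interior, which would immediately contradict $D \notin V(U)$.

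The cycle $\mathcal{C}$ meets $A$ with two incident edges going east (to $B$) and north (to $C$), so locally its Jordan interior is either the NE quadrant of $A$ (which contains $D$) or the SW quadrant. The main obstacle will be ruling out the SW case. My plan here is to apply \Cref{lemma:collinear-forbiden-pattern} iteratively: if $L := (x-1, y) \in V(U)$, then the edge $\{A, L\}$ must be oriented $L \to A$ (otherwise $A \to L$ combined with $A \to B$ would give $A$ two horizontally collinear closer neighbors, forcing a hole by \Cref{lemma:collinear-forbiden-pattern}); symmetrically $Do := (x, y-1) \to A$ whenever $Do \in V(U)$. In the putative SW-interior case, simple-connectedness of $U$ forces the enclosed SW lattice cells (including $L$, $Do$, and further SW cells along the cycle's interior) into $V(U)$; I expect that tracing these forced orientations along the boundary of the enclosed SW region will locate a vertex that is a local distance-maximum with two collinear closer neighbors, again violating \Cref{lemma:collinear-forbiden-pattern} and producing $h(U) > 0$, contradicting the hypothesis. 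Consequently, the interior at $A$ must be NE, $D$ is enclosed by $\mathcal{C}$, and the contradiction with $D \notin V(U)$ completes the proof.
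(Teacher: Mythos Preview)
Your overall architecture matches the paper's proof: both reduce to one pattern by rotation, build a simple cycle $\mathcal{C}$ from two shortest paths $\pi_B,\pi_C$ (meeting at a first/last common vertex $w$ and avoiding $A$), and split on whether the Jordan interior at the corner $A$ points toward the missing diagonal vertex $D$ or toward the opposite diagonal. Your treatment of the ``easy'' (NE) case is correct and essentially identical to the paper's.

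The gap is in the SW-interior case. Your plan there is to deduce $L\to A$ and $Do\to A$ from \Cref{lemma:collinear-forbiden-pattern} and then ``trace these forced orientations along the boundary of the enclosed SW region'' to find a local distance-maximum with two collinear closer neighbors. As written this does not terminate in a contradiction: iterating your argument westward only shows $d((x-j,y),s)$ increases with $j$ while those vertices remain in $U$, and when the row exits $U$ you are left without any collinear configuration. The sentence ``I expect that tracing \ldots\ will locate a vertex'' is precisely where a concrete argument is needed and none is given.

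The paper closes this case differently and more directly. In your coordinates: since the SW diagonal point $(x-1,y-1)$ lies in $U$ (it is interior to or on $\mathcal{C}$), trace the row $y'=y-1$ through the interior until it first meets $\mathcal{C}$ at some $v_f$. The straight segment is a geodesic, and since $v_f$ lies on one of the shortest paths $\pi_B$ or $\pi_C$, a triangle-inequality comparison through $v_f$ yields $d(Do,s)\le d(B,s)=d(A,s)-1<d(A,s)$. Hence $A\to Do$, and together with $A\to C$ this is the vertical collinear pattern \emph{at $A$ itself}, contradicting \Cref{lemma:collinear-forbiden-pattern}. Note that this geometric inequality is exactly the opposite of the inequality $d(Do,s)>d(A,s)$ you derived; you already had one half of the contradiction, but the missing half requires the column/row trace and distance comparison, not an inductive walk through forced orientations.
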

\begin{proof}
    Since the patterns are equivalent up to rotation, it suffices to show that \Cref{fig:forbidden-subgraphs-j} is avoided. Suppose that there exists vertices $v, v_l, v_u \in U$ where $v = (x_v, y_v), v_l = (x_v - 1, y_v), v_u = (x_v, y_v + 1)$ and $d(v, s) > \max(d(v_l, s), d(v_u, s))$. Let $P_l, P_u$ denote the paths from $s$ to $v_l$, $v_u$ respectively, and assume without loss of generality that $P_l \cap P_u = \{s\}$ (otherwise move the sink to the last intersection). Denote by $A$ the simple polygon formed by the rectilinear embedding of $P_l \cup P_u \cup v$. At $v$, since the vector $(-1, 1)$ is \emph{not parallel} to the boundary of $A$, one of the vectors $(-1, 1), (1, -1)$ must point towards in the interior of $A$. If $(-1, 1)$ does, then it follows that $(x_v - 1, y_v + 1) \in P$ (as $h(P) = 0$), and the pattern is avoided. Now suppose $(1, -1)$ point towards the interior of $A$, which implies $v_r := (x_v + 1, y_v - 1) \in P$. Consequently, since $h(P) = 0$, it follows that the sequence of vertices $v_r = (x_v + 1, y_v - 1), (x_v + 1, y_v), \cdots, (x_v + 1, y_v + k)$ all belong to $P$ until the first $k$ such that $v_f := (x_v + 1, y_v + k) \in P_l \cup P_u$, assume without loss of generality that it belongs to $P_u$. The path from $v_r$ to $(x_v + 1, y_v + k)$ is a geodesic as it only goes up, so this implies $d(v_r, v_f) \leq d(v_u, v_f)$ and consequently $d(v_r, s) \leq d(v_u, s) < d(v, s)$. So the edge $\{v, v_r\}$ will be directed towards $v_r$, contradicting \Cref{lemma:collinear-forbiden-pattern}. 
\end{proof}

\begin{theorem}
    \label{thm:forbidden-patterns}
    Let $U$ be a connected undirected polyomino without holes (i.e., $h(U) = 0$) and $s \in V(U)$ some vertex. Then, $P := U_{\downarrow s}$ is an oriented polyomino that is $\mathcal{F}$-avoidant.
\end{theorem}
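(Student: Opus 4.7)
The plan is to prove this by verifying directly that none of the 18 patterns in $\mathcal{F}$ arises as an induced subgraph of $P$. The three preceding lemmas were tailored precisely for this purpose, and together they partition $\mathcal{F}$ into three natural groups. So the proof should be essentially a synthesis step that routes each pattern to the appropriate lemma.

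First, for the two collinear ``T-shaped'' patterns \ref{fig:forbidden-subgraphs-a} and \ref{fig:forbidden-subgraphs-b} — a middle vertex $v$ with two outgoing edges to its two collinear neighbors $v_l, v_r$ — I would argue by contrapositive of Lemma~\ref{lemma:collinear-forbiden-pattern}. By the definition of $U_{\downarrow s}$, outgoing edges from $v$ to both $v_l$ and $v_r$ force $d(v, s) > \max(d(v_l, s), d(v_r, s))$, which together with the hypothesis $h(U) = 0$ contradicts the lemma.

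Second, the four parallel-edge 2$\times$2 patterns \ref{fig:forbidden-subgraphs-c}–\ref{fig:forbidden-subgraphs-f} (with the ``blue'' edges whose orientation is unconstrained) are handled by Lemma~\ref{lemma:no-opposite-parallel-edges}, which applies since $P$ is a connected polyomino oriented towards $s$. Third, the four L-shaped patterns \ref{fig:forbidden-subgraphs-g}–\ref{fig:forbidden-subgraphs-j}, in which one corner of a 2$\times$2 region is required to be absent from $P$, are ruled out by Lemma~\ref{lemma:orthogonal-forbidden-pattern}, which again uses $h(U) = 0$. Assembling these three cases shows that no element of $\mathcal{F}$ occurs as an induced subgraph of $P$, so $P$ is $\mathcal{F}$-avoidant.

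There is no real obstacle at the level of the theorem itself — the conceptual work has already been absorbed into the lemmas, most notably the Jordan curve arguments in Lemmas~\ref{lemma:collinear-forbiden-pattern} and~\ref{lemma:orthogonal-forbidden-pattern}. The only thing to check with care is that the three lemmas collectively cover all 18 patterns listed in~\Cref{fig:forbidden-subgraphs}: the rotational variants implicit in the first row (horizontal vs.\ vertical T-shapes), the four choices of blue-edge orientation inside \ref{fig:forbidden-subgraphs-c}–\ref{fig:forbidden-subgraphs-f}, and the four rotational placements of the absent corner in \ref{fig:forbidden-subgraphs-g}–\ref{fig:forbidden-subgraphs-j}. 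Once this bookkeeping is confirmed, the theorem follows immediately.
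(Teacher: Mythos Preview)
Your proposal is correct and matches the paper's own proof, which is a one-line invocation of Lemmas~\ref{lemma:collinear-forbiden-pattern}, \ref{lemma:no-opposite-parallel-edges}, and~\ref{lemma:orthogonal-forbidden-pattern}. Your only addition is making explicit which lemma handles which group of patterns and noting the small bookkeeping about rotational/orientation variants, which is exactly the intended reading.
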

\begin{proof}
Follows directly from~\Cref{lemma:collinear-forbiden-pattern,lemma:no-opposite-parallel-edges,lemma:orthogonal-forbidden-pattern}.
\end{proof}
The following lemma captures the only geometric property we use of folding and unfolding. Note that we do not need a precise definition of what an unfolding motion is, just that it is orientation-preserving (in the geometric sense acting on) and acts along grid-lines. 

\begin{lemma}
    \label{lemma:geometric_folding}
    Let $\{\sqa, \sqb\}$ be adjacent squares on a box $B$ and $o : B \to [4]$ denote the orientation assignment induced by some unfolding of $B$ (\Cref{sec:local}). If the edge $\{\sqa, \sqb\}$ is not cut in the unfolding, then $o(\sqb) = o(\sqa) + r_{\sqa}(\sqb)$. 
\end{lemma}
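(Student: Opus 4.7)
The plan is to argue purely in terms of two geometric facts about any unfolding motion: (i) it is a composition of rigid motions in space, each of which is orientation-preserving (in the geometric, determinant-$+1$ sense, acting on the 2D plane once the unfolding is complete); and (ii) whenever the common edge of two adjacent squares is not cut, those two squares remain rigidly connected throughout the motion and share the same edge in the final 2D layout. Together, these imply that after unfolding, $\sqa$ and $\sqb$ are adjacent unit squares in the plane sharing a specific edge determined by the combinatorics of $B$.

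I would first observe that the quantity $r_{\sqa}(\sqb)$ defined in Section~\ref{sec:local} is \emph{intrinsic} to the box $B$ together with the canonical dot labeling from Figure~\ref{fig:ori_unfold_box}: it only records ``if $\sqa$ were placed in the plane with its dot at orientation $0$, where would the dot of $\sqb$ end up?'' This is well-defined because the labeling of the corner dots on $\sqa$ and $\sqb$ together with the identity ``$\sqa$ and $\sqb$ share an uncut edge on $B$'' uniquely determines a pair of adjacent unit squares in the plane (once the position and orientation of $\sqa$ are fixed), and hence a unique value in $\mathbb{Z}_4$ for the rotation state of $\sqb$.

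Next comes the main step. Let $\varphi$ denote the planar rotation by $-o(\sqa)\cdot 90^\circ$ about any point of $\sqa$'s 2D position. Apply $\varphi$ to the entire unfolded net. Since $\varphi$ is orientation-preserving and maps grid edges to grid edges, the resulting picture is still an unfolding of $B$, now one in which $\sqa$ sits at orientation $0$. In this new picture, $\sqb$ is still adjacent to $\sqa$ across the same combinatorial edge of $B$ (this is where we use that $\{\sqa,\sqb\}$ is uncut, so the two squares are rigidly linked and $\varphi$ moves them together), and the orientation of $\sqb$ has been shifted by $-o(\sqa)$ modulo~$4$, becoming $o(\sqb)-o(\sqa)$. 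By the intrinsic definition of $r_{\sqa}(\sqb)$ recalled above, this quantity must equal $r_{\sqa}(\sqb)$, and rearranging yields $o(\sqb)=o(\sqa)+r_{\sqa}(\sqb)$ in $\mathbb{Z}_4$.

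The main obstacle is not the algebra but the appeal to a rigorous notion of ``unfolding.'' As the authors themselves remark, making this precise requires the machinery developed in Chapter~11 (and related chapters) of Demaine and O'Rourke~\cite{demaine2007geometric}. In a fully formal treatment, one would define an unfolding as a finite sequence of rotations of subsets of $B$ about grid edges, check by induction on the length of this sequence that the composite map restricted to each face is a planar isometry with determinant $+1$, and conclude that pairs of faces joined by an uncut edge end up as adjacent unit squares sharing the image of that edge with matching dot positions. Given that infrastructure, the lemma reduces to the two-line observation above; without it, one has to rely on the informal ``intuition'' that the paper deliberately adopts.
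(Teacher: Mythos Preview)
Your proposal is correct and follows essentially the same approach as the paper: both rely on the fact that an unfolding is a composition of orientation-preserving rigid motions along grid lines, so relative orientations between rigidly-linked squares are preserved. The paper's proof is terser---it simply asserts that each $\delta_i$ is orientation-preserving and hence $r_{\sqa}(\sqb)$ is invariant---whereas you spell out the same idea by explicitly rotating the net so that $\sqa$ has orientation $0$ and then reading off the definition of $r_{\sqa}(\sqb)$; but this is a presentational difference, not a mathematical one.
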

\begin{proof}
    Note that an unfolding is formed by a sequence $\delta_{1}, \cdots, \delta_{k}$ geometric transformations acting on $B$ along the grid-lines. In particular, each $\delta_i$ is \emph{orientation-preserving}, and therefore the relative orientation $r_{\sqa}(\sqb)$ remains invariant after the action of each $\delta_i$, thus also remains invariant after $\delta_{k}$. 
\end{proof}
This is now enough to prove that our local constraints are necessary conditions of unfoldings. 

\begin{theorem}
    \label{thm:net_implies_valid}
    Let $P$ be a polyomino that folds into a box $B$ (i.e. it gives a net of $B$). This naturally induces a bijection $F : P \to B$ sending vertices of the polyominos to squares on $B$ by folding the net into $B$. Let $C_P$ denote the cut edges induced by this mapping:
    \[C_p = \{\{\sqa, \sqb\} \mid \{F^{-1}(\sqa), F^{-1}(\sqb)\} \notin E(P)\}\]
    where $E(P)$ denotes the edge set of $P$. Then there exists a corresponding satisfying assignment of the local constraints (\Cref{sec:local}) using $C_p$ as the cut edges. I.e. the boolean variable $e_{\sqa, \sqb}$ is true if and only if $\{\sqa, \sqb\} \notin C_p$. 
\end{theorem}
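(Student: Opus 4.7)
The plan is to exhibit a satisfying assignment of every variable in the encoding straight from the unfolding, and then verify each clause class in turn. First I would define $o(\sqa)$ for every square $\sqa \in B$ as the orientation of $\sqa$ induced by physically unfolding $P$; \Cref{lemma:geometric_folding} immediately yields the orientation-propagation clauses at every preserved edge, while the ``exactly one orientation'' clauses hold trivially because $o$ is a function. I would then verify the corner property P2 and the necessity property P4, both of which reduce to the single geometric observation that $F$ identifies the non-cut edges of $B$ with the edges of $P$ viewed as an induced subgraph of the grid $\mathcal{G}$. P2 fails only if three squares meeting at a box-corner were pairwise adjacent in $P$, which would embed a triangle in the bipartite graph $\mathcal{G}$; P4 fails only if three of the four edges of a face $2\!\times\!2$ on $B$ were preserved, which would place four squares at the vertices of a unit square in $\mathbb{Z}^2$ with only three of the four adjacencies present in $E(P)$, contradicting induced-subgraphness.

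For the edge directions I would let $s := F^{-1}(s^\star)$, where $s^\star$ is the distinguished sink vertex hardcoded into the encoding, orient $P$ towards $s$ via \Cref{def:oriented-towards} to obtain $P_{\downarrow s}$, and push this orientation through $F$ onto the non-cut edges of $B$, setting the $d_{\sqa,\sqb}$ variables accordingly. The at-most-one-direction clauses and the $d \Rightarrow e$ clauses then hold trivially. The sink clauses follow from connectivity of $P$: $s^\star$ has outdegree zero because every edge incident to $s$ in $P_{\downarrow s}$ points inward, and for each $u \in V(P) \setminus \{s\}$ the first edge along any shortest $u \to s$ path in $P$ becomes a preserved outgoing edge at $F(u)$.

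The main obstacle is to rule out the forbidden patterns of \Cref{fig:forbidden-subgraphs-body} on $B$. For the $2\!\times\!2$ patterns (c)--(f), whose blue edges are already assumed preserved, the four squares of $B$ pull back via $F^{-1}$ to a $2\!\times\!2$ block in $P$ (again by the induced-subgraph observation), so \Cref{thm:forbidden-patterns} applied to $P_{\downarrow s}$ rules them out directly. For the three-vertex patterns (a) and (b), and for the L-shaped patterns (g)--(j) whose geometric meaning involves an ``absent'' fourth square, I would combine the observation recorded in the caption of \Cref{fig:forbidden-subgraphs-body} --- that preserving exactly three of the four edges of a face $2\!\times\!2$ is impossible --- with P4 to reduce each such pattern either to one of the forbidden $2\!\times\!2$ patterns in $P$ or to a direct clash with P4. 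Translating between the plane in which $P$ lives and the box surface on which the SAT clauses live is the delicate part. A further subtlety worth flagging is that \Cref{thm:forbidden-patterns} is stated under $h(P) = 0$, so a separate case analysis would be needed for non-standard touching or overlapping nets, but the standard-net case goes through cleanly by the plan above.
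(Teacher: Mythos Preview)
Your overall strategy is exactly the paper's: read off $o$ from the unfolding and verify the orientation clauses via \Cref{lemma:geometric_folding}; orient $P$ towards $F^{-1}(s^\star)$, push the directions through $F$, and check the sink/outdegree clauses by BFS-distance reasoning; then invoke \Cref{thm:forbidden-patterns} for the pattern clauses. The paper's own proof is considerably terser than yours---it literally just cites \Cref{lemma:geometric_folding} and \Cref{thm:forbidden-patterns} and stops---so your extra care about the box-to-plane transfer is welcome.

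Two points where you diverge from the paper. First, the theorem only claims the constraints of \Cref{sec:local}; properties P2 and P4 are encoded in \Cref{sec:valid_cut_edges}, so your paragraphs establishing them are correct but superfluous here. Second, your plan for patterns (a), (b), (g)--(j) via the ``exactly three of four'' observation and P4 is not the right mechanism, and for the collinear patterns (a)--(b) it does not apply at all: those involve three squares in a row and no face $2{\times}2$. The paper's (implicit) route is more direct. Because $F^{-1}$ is a local isometry on preserved edges, any preserved-and-directed configuration on $B$ pulls back to the geometrically identical configuration in $P$: a collinear triple on $B$ becomes a collinear triple in $P$, an L on $B$ becomes an L in $P$, and an ``absent'' fourth corner on $B$ (both edges to it cut) forces the corresponding grid position in $P$ to be genuinely absent---were it occupied, induced-subgraphness would supply both missing edges in $P$, whose images under $F$ would, by the same local-isometry argument, be precisely the two box edges you assumed cut. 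Thus each forbidden pattern on $B$ is the same forbidden pattern in $P_{\downarrow F^{-1}(s^\star)}$, and \Cref{thm:forbidden-patterns} (through \Cref{lemma:collinear-forbiden-pattern}, \Cref{lemma:no-opposite-parallel-edges}, \Cref{lemma:orthogonal-forbidden-pattern}) finishes. Your instinct that this transfer is ``the delicate part'' is right; the P4 detour is not how it is bridged.
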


\begin{proof}
    We first begin by constructing the variable assignments.
    \begin{enumerate}
        \item Edges variables $e_{\sqa, \sqb}$: This is given by $C_p$ defined in the statement of the theorem.
        \item Orientation variables $o_{s, r}$: Recall that $F : P \to B$ denotes a folding of $P$ into $B$. Since we think of $B$ as labeled (with dots), we can first fold $P$ into $B$, label $P$ with dots in the same way as $B$ was labeled, and then unfold $P$ (now labeled with dots). After this action, every vertex $p \in P$ now has an orientation value associated with the unfolding, we define the orientation of $o(F(p)) \in [4]$ to be this orientation value. Since $F$ is a bijection, this constructs an assignment $o : B \to [4]$. 
        \item Sink ($s^\star$) and edge directions ($d_{\sqa, \sqb}$): Pick an arbitrary square $s^\star \in B$ as the sink, assign the direction variables $d_{\sqa, \sqb}$ to be the orientations inherited from the oriented polyomino $P_{\downarrow s^\star}$.
    \end{enumerate}
    It remains to verify that such assignments satisfy all constraints introduced in \Cref{sec:local}.
    \begin{enumerate}
        \item Orientation constraints: These constraints only enforce that relative orientations are preserved, therefore directly follow from \Cref{lemma:geometric_folding}.
        \item Forbidding patterns (\Cref{fig:forbidden-pattern}): This follows from the construction of $P_{\downarrow s^\star}$ and \Cref{thm:forbidden-patterns}.
    \end{enumerate}
    All local constraints are therefore satisfied, and the proof is complete. 
\end{proof}

\subsection{Common Unfoldings}
\label{subsec:common_unfoldings}

The following theorem establishes that any common unfolding of boxes $B_1, B_2$ also naturally satisfy the constraint $\text{EQUIV}_{B_1, B_2}$ (\Cref{sec:common_unfoldings}).

\begin{theorem}
    \label{thm:common_net_is_valid}
    Let $P$ be a polyomino that folds into boxes $B_1, B_2$. (I.e. it gives a common unfolding), let $(C_1, C_2)$ denote the cut edges of $B_1, B_2$ respectively induced by $P$ as defined in \Cref{thm:net_implies_valid}. Then there exists a satisfying solution of the equivalence constraints (\Cref{sec:common_unfoldings}) with $(C_1, C_2)$ being the cut edges. 
\end{theorem}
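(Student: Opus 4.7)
The plan is to construct the satisfying assignment of $\text{EQUIV}_{B_1,B_2}$ directly from the two foldings of $P$. Since $P$ folds into both boxes, \Cref{thm:net_implies_valid} gives bijections $F_1 \colon P \to B_1$ and $F_2 \colon P \to B_2$, together with satisfying orientation assignments $o_1 \colon B_1 \to [4]$ and $o_2 \colon B_2 \to [4]$ of the local constraints with cut edges $C_1$ and $C_2$ respectively.

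I take the obvious candidate $M := F_1 \circ F_2^{-1} \colon B_2 \to B_1$, which is a bijection and hence satisfies the cardinality constraints $\sum_{s \in B_1} m_{s, \boxb{s}} = 1$. For the orientation-shift auxiliary variables, I set $\rho(s) := o_2(M^{-1}(s)) - o_1(s) \in \mathbb{Z}_4$; the implication $m_{s,\boxb{s}} \land o_{\boxb{s}, d+r} \land o_{s,d} \rightarrow \rho_{s,r}$ is then immediate, since whenever $M(\boxb{s}) = s$, $o_1(s) = d$, and $o_2(\boxb{s}) = d+r$, we must have $r = o_2(\boxb{s}) - o_1(s) = \rho(s)$.

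The substantive step is checking the three edge-preservation implications. Fix an edge $\{\sqa, \sqb\}$ of $B_1$, set $\boxb{\sqa} := M^{-1}(\sqa)$ and $r := \rho(\sqa)$, and let $d$ be the planar direction from $\sqa$ to $\sqb$ in the unfolding $F_1$. The crucial observation is that both $F_1$ and $F_2$ are unfoldings of the \emph{same} polyomino $P$: the square $F_1^{-1}(\sqa) = F_2^{-1}(\boxb{\sqa})$ has a unique planar neighbor in $P$ in direction $d$, namely $F_1^{-1}(\sqb)$. Applying $F_2$ to this neighbor gives $\boxb{\sqb} := M^{-1}(\sqb)$, which by the definition of $\rho$ is exactly the unique neighbor of $\boxb{\sqa}$ on $B_2$ whose planar direction under orientation $o_2(\boxb{\sqa}) = o_1(\sqa) + r$ matches $d$. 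This yields the first implication $m_{\sqa, \boxb{\sqa}} \land \rho_{\sqa, r} \land e_{\sqa, \sqb} \rightarrow m_{\sqb, \boxb{\sqb}}$. The two cut-edge equivalences then reduce to the fact that $\{\sqa, \sqb\}$ is preserved iff $\{F_1^{-1}(\sqa), F_1^{-1}(\sqb)\}$ is an edge of $P$, which by applying $F_2$ holds iff $\{\boxb{\sqa}, \boxb{\sqb}\}$ is preserved in $B_2$.

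The main obstacle is the direction bookkeeping: one must verify that the informal ``same planar direction'' argument above matches the encoding's formal notion of $\boxb{\sqb}$ being the neighbor of $\boxb{\sqa}$ determined by the shift $r$. This reduces to tracing through the construction of $o_i$ in \Cref{thm:net_implies_valid}, where $o_i(s)$ records how the canonical dot-labeling of $B_i$ at the square $F_i^{-1}(s)$ is rotated upon unfolding into the plane. Since the planar square $F_1^{-1}(\sqa) = F_2^{-1}(\boxb{\sqa})$ is a fixed object in $P$, the two planar frames of reference at $\sqa$ (via $F_1$) and $\boxb{\sqa}$ (via $F_2$) differ by exactly the rotation $r = o_2(\boxb{\sqa}) - o_1(\sqa)$, which is precisely the shift used in the encoding's ``same direction'' clause.
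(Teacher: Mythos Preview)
Your proposal is correct and follows essentially the same approach as the paper: define $M := F_1 \circ F_2^{-1}$, set $\rho$ as the orientation difference, and verify the edge-preservation implications by identifying the common planar square $F_1^{-1}(\sqa) = F_2^{-1}(\boxb{\sqa})$ in $P$. Your final paragraph on the direction bookkeeping is in fact slightly more careful than the paper's own argument, which essentially assumes without comment that the encoding's $\boxb{\sqb}$ coincides with $M^{-1}(\sqb)$.
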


\begin{proof}
    Since $P$ folds into both boxes $B_1, B_2$, denote by $F_i : P \to B_i$ for $i \in [2]$ the maps induced by the foldings. It naturally induces a bijection 
    \[M \coloneqq F_1 \circ F_{2}^{-1}: B_2 \to B_1\] 
    It suffices to show that $M$ is a satisfying solution, with $\rho : B_1 \to [4]$ defined via $\rho(\sqa) = o(\boxb{\sqa}) - o(\sqa)$ where $o$ is the orientation assignment given by \Cref{thm:net_implies_valid}. 
    \begin{itemize}
        \item By construction, the constraints
        \begin{align*}
            \sum_{s \in B_1} m_{s, \boxb{s}} &= 1\\
            m_{s, \boxb{s}} \land o_{\boxb{s}, d + r} \land o_{s, d} &\rightarrow \rho_{s, r}
        \end{align*}
        are satisfied since $M$ is a bijection.
        \item $m_{\sqa, \boxb{\sqa}} \land \rho_{\sqa, r} \land e_{\sqa, \sqb} \rightarrow m_{\sqb, \boxb{\sqb}}$:  Recall that such constraints only enforce preservation of 2D-positions. E.g. If $M(\boxb{\sqa}) = \sqa$ for some $\sqa \in B_1, \boxb{\sqa} \in B_2$ where $F_1^{-1}(s_1) = (x, y)$ \emph{and} $\sqb \in B_1$ satisfies $F_1^{-1}(\sqb) = (x, y + 1)$, then the unique $\boxb{\sqb} \in B_2$ with $F_2^{-1}(\boxb{\sqb}) = (x, y + 1)$ must satisfy $M(\boxb{\sqb}) = \sqb$. But this follows directly by construction:
        \begin{align*}
            M(\boxb{\sqb}) = F_1(F_2^{-1}(\boxb{\sqb})) = F_1((x, y + 1)) = F_1(F_1^{-1}(\sqb)) = \sqb
        \end{align*}
        thus such constraints are satisfied by $M$. Furthermore, for the remaining edge preservation constraints we also have:
        \begin{align*}
            e_{\boxb{\sqa}, \boxb{\sqb}} &\iff \{\boxb{\sqa}, \boxb{\sqb}\} \notin C_2 \iff \{F_2^{-1}(\boxb{\sqa}), F_2^{-1}(\boxb{\sqb})\} \in E(P)\\
            &\iff \{F_1(F_2^{-1}(\boxb{\sqa})), F_1(F_2^{-1}(\boxb{\sqb}))\} \notin C_1 \iff \{M(\boxb{\sqa}), M(\boxb{\sqb})\} \notin C_1\\
            & \iff e_{\sqa, \sqb}
        \end{align*}
        Hence all constraints are satisfied, completing the proof. 
    \end{itemize}
\end{proof}

Finally, this allows us to prove $\minPossible(3) > 58$.

\begin{theorem}
    $\minPossible(3) > 58$. 
\end{theorem}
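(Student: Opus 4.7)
The plan is to reduce the claim to a finite case analysis over the surface areas $s \leq 58$ for which $|P(s)| \geq 3$, and then use the computational pipeline established in the previous sections to rule out each case. First, I would solve $2(ab+ac+bc) = s$ with $a \leq b \leq c$ for every even $s$ with $s \leq 58$ and list the triples $(a,b,c)$; a short enumeration shows that $|P(s)| \geq 3$ holds only for $s \in \{46, 54, 58\}$, with
\[
P(46) = \{(1,1,11),(1,2,7),(1,3,5)\}, \quad P(54) = \{(1,1,13),(1,3,6),(3,3,3)\},
\]
\[
P(58) = \{(1,1,14),(1,2,9),(1,4,5)\}.
\]
Since any common unfolding of $k > 3$ boxes of area $s$ is in particular a common unfolding of some triple, it suffices to refute the existence of a common unfolding for each such triple.

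Next, for each $s$ and each choice of three boxes $B_1, B_2, B_3$ from $P(s)$, I would invoke the two-stage procedure sketched in the Methodology outline. In the first stage, I construct the CNF $\Phi(B_1, B_2)$ described in Sections \ref{sec:local} and \ref{sec:common_unfoldings} (augmented with the symmetry-breaking unit clauses of \Cref{sec:symmetry}) and enumerate the set $S'$ of all satisfying assignments, each giving a candidate pair of cut-edge sets $(C_1, C_2)$. By \Cref{thm:net_implies_valid} and \Cref{thm:common_net_is_valid}, every genuine common unfolding of $B_1$ and $B_2$ corresponds to some element of $S'$, so $S \subseteq S'$. In the second stage, for each $(C_1, C_2) \in S'$, I run a separate SAT call that fixes $C_1$ and $C_2$ and asks for cut edges $C_3$ on $B_3$ together with an equivalence mapping satisfying $\text{EQUIV}_{B_1, B_3}$ and $\text{EQUIV}_{B_2, B_3}$ (again invoking \Cref{thm:common_net_is_valid} for soundness). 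If every such call is unsatisfiable, then no common unfolding of $B_1, B_2, B_3$ exists.

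The main obstacle is scale, not correctness. At $s = 58$ the enumeration for $B_1 = \boxDims{1}{1}{14}$, $B_2 = \boxDims{1}{2}{9}$ produces on the order of $5.5 \cdot 10^5$ candidates (as reported in \Cref{tab:all_sols_times}), each of which must survive a further SAT refutation; a naive implementation would be infeasible. To keep the pipeline tractable, I would (i) exploit the symmetry-breaking of \Cref{sec:symmetry} to partition the enumeration into independent subproblems, one per pair of representative squares of $B_2$, and run them in parallel; (ii) choose $B_1$ of shape $\boxDims{1}{1}{n}$ whenever possible so that the extra rotational symmetry can be quotiented out by fixing $o(\hat s) = 0$ on one of the $1 \times 1$ faces; and (iii) use the allsat variant of CaDiCaL referenced in the paper for the enumeration stage, while a standard CaDiCaL suffices for the second-stage refutations.

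Finally, because $S'$ is only an over-approximation of $S$, unsatisfiability of every second-stage call \emph{does} legitimately rule out a common unfolding of the triple: an actual common unfolding would force a genuine $(C_1, C_2) \in S \subseteq S'$, and the corresponding second-stage instance would be satisfiable by the mapping induced by the common net, contradicting unsatisfiability. Applying this argument to each triple in $P(46) \cup P(54) \cup P(58)$ and collecting the unsatisfiability certificates yields $\minPossible(3) > 58$. \qed
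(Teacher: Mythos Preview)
Your proposal is correct and follows essentially the same two-stage computational pipeline as the paper: enumerate the over-approximation $S' \supseteq S$ of common unfoldings for a chosen pair $(B_1,B_2)$ via the local-constraint encoding, then refute extendability to $B_3$ for every candidate. One small deviation: in the second stage you impose both $\text{EQUIV}_{B_1,B_3}$ and $\text{EQUIV}_{B_2,B_3}$, whereas the paper only uses $\text{EQUIV}_{B_1,B_3}$ (since $C_1$ is already fixed, compatibility with $B_1$ alone suffices for soundness); your version is still sound and may even speed up refutation, but it is not needed.
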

\begin{proof}
    Let $B_1, B_2, B_3$ be 3 boxes with common area. By \Cref{thm:common_net_is_valid}, if a common unfolding existed, there must necessarily be cut edges $(C_1, C_2, C_3)$ such that all local constraints, $\text{EQUIV}_{B_1, B_2}$ and $\text{EQUIV}_{B_1, B_3}$ can be satisfied. In other words, it suffices to show that for every pair of cut edges $(C_1, C_2)$ satisfying $\text{EQUIV}_{B_1, B_2}$, there does not exist a corresponding $C_3$ where $(C_1, C_3)$ can satisfy $\text{EQUIV}_{B_1, B_3}$. But this is precisely what we show by a complete enumeration of all solutions for $B_1, B_2$ (\Cref{tab:all_sols_times}) up to area $58$. Hence, $\minPossible(3) > 58$. 
\end{proof}

\section{Additional Unfoldings}\label{sec:extra_unfolding}
As described in~\Cref{sec:local}, a shortcoming of the approach of Tadaki and Amano~\cite{tadaki2020searchdevelopmentsboxhaving} is that it relies on a heuristic constraining the \emph{diameter} (largest graph distance between two squares on a net) of solutions, thus potentially missing solutions with relatively large diameters.
Through our encoding, which does not rely on the small-diameter assumption, we are able to find solutions with relatively large diameters, as depicted in~\Cref{fig:large_diam_80}. 



%
\begin{figure}[b]
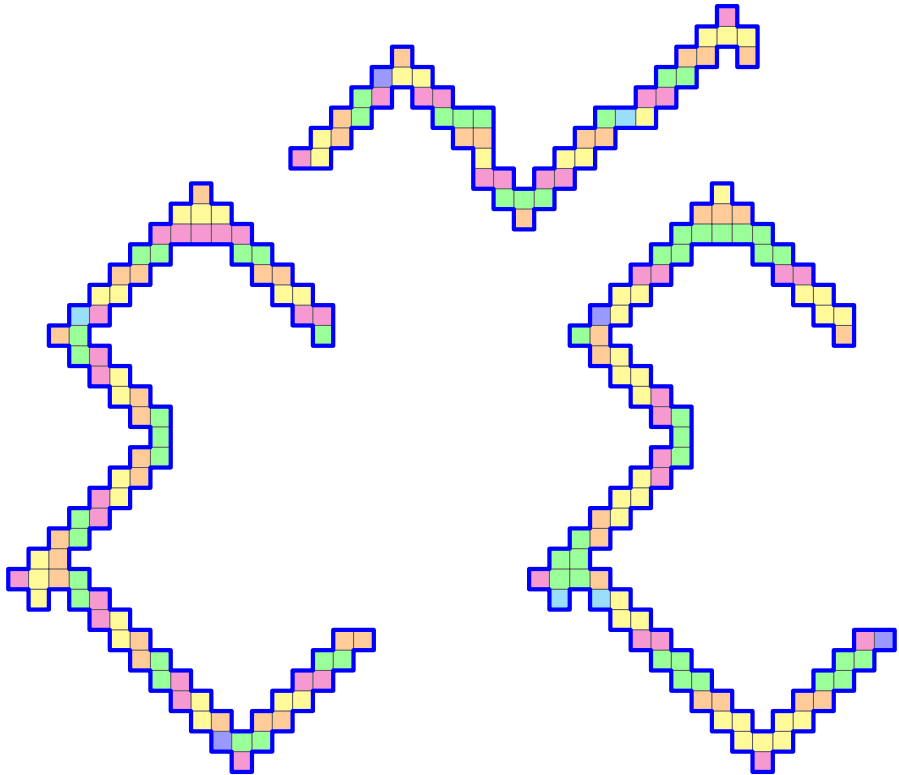

    \switch{figures/diameter_1111}
    \vspace{-20pt}
    
    \begin{subfigure}{0.49\textwidth}
    \switch{figures/diameter_1120}
    \end{subfigure}
    \begin{subfigure}{0.49\textwidth}
    \switch{figures/diameter_1213}
    \end{subfigure}        
         \vspace{-5pt}
    \caption{Top: A common unfolding of area $46$ for boxes $\boxDims{1}{1}{11}$ (faces shown) and $\boxDims{1}{2}{7}$. The diameter of this net is $41$, the largest among all solutions. Bottom: A common unfolding for boxes $\boxDims{1}{1}{20}$ and $\boxDims{1}{2}{13}$.}\label{fig:large_diam_80}
\end{figure}

\fi

\end{document}